\documentclass[pre,twocolumn,showpacs,floatfix, superscriptaddress]{revtex4-1}
\usepackage{amsmath,amssymb,amsfonts,amsthm}
\usepackage{mathcomp}
\usepackage{comment}
\usepackage{graphicx,subfigure}
\usepackage{txfonts}
\usepackage{epstopdf}
\usepackage[title]{appendix}
\usepackage{color}
\usepackage{tikz}
\usetikzlibrary{calc}
\usepackage{quantikz}
\usetikzlibrary{positioning}
\usepackage{tikz-cd}
\newcommand{\bnum}[1]{\textcolor{blue!70!black}{#1}}
\newcommand{\gnum}[1]{\textcolor{green!60!black}{#1}}
\newcommand{\rnum}[1]{\textcolor{red!60!black}{#1}}
\usepackage{soul}
\usepackage[normalem]{ulem}
\usepackage{physics}
\usepackage{bbold}
\usepackage{footnote}
\usetikzlibrary{positioning,arrows.meta}
\usepackage{graphicx}
\usepackage{algorithm}
\usepackage{algorithmic}
\usepackage{nicematrix}

\usepackage{xr-hyper} 
\usepackage[unicode]{hyperref} 
\hypersetup{
	unicode=true,                                           % non-Latin characters in AcrobatÃÂ¢ÃÂÃÂs bookmarks
	plainpages=false,
	pdffitwindow=true,                                      % page fit to window when opened
	colorlinks=true,                                        % false: boxed links; true: colored links
	linkcolor=blue,                                        % color of internal links
	citecolor=blue,                                         % color of links to bibliography
	filecolor=blue,                                        % color of file links
	urlcolor=blue                                          % color of external links
}
\usetikzlibrary{arrows.meta,positioning,calc,shapes.geometric}

\usepackage{xcolor}
\definecolor{lightgray}{gray}{0.97}

\usepackage[many]{tcolorbox}    	% for COLORED BOXES (tikz and xcolor included)
\tcbset{
    rounded corners,
    colback = lightgray,
    before skip = 0.25cm,    % add extra space before the box
    after skip = 0.25cm,      % add extra space after the box
    left=0mm,
    right=0mm,
    top=0mm,
    bottom=0mm
}                           % setting global options for tcolorbox
\newtcolorbox{boxA}{
    boxrule = 1.0pt,
    colframe = black, % frame color
    arc = 3pt   % corners roundness
}

\newtheorem{lemma}{Lemma}

\begin{document}
\title{Graph automorphisms to obtain Clifford symmetries in open and closed qudit models}

\author{Charlie Nation}
\email{c.nation2@exeter.ac.uk}
\affiliation{Department of Physics and Astronomy, University of Exeter, Stocker Road, Exeter EX4 4QL, United Kingdom}
\affiliation{QuAOS collaboration}

\author{Rick P. A. Simon}
\affiliation{Department of Physics and Astronomy, University of Exeter, Stocker Road, Exeter EX4 4QL, United Kingdom}
\affiliation{QuAOS collaboration}

\author{Shreya Banerjee}
\affiliation{Department of Physics and Astronomy, University of Exeter, Stocker Road, Exeter EX4 4QL, United Kingdom}
\affiliation{QuAOS collaboration}

\author{Francesco Martini}
\affiliation{QuAOS collaboration}

\author{Alessandro Ricottone}
\affiliation{QuAOS collaboration}

\author{Federico Cerisola}
\affiliation{Department of Physics and Astronomy, University of Exeter, Stocker Road, Exeter EX4 4QL, United Kingdom}
\affiliation{QuAOS collaboration}

\author{Luca Dellantonio}
\email{l.dellantonio@exeter.ac.uk}
\affiliation{Department of Physics and Astronomy, University of Exeter, Stocker Road, Exeter EX4 4QL, United Kingdom}
\affiliation{QuAOS collaboration}

\date {\today}
\begin{abstract}
    In the recent article [\href{https://arxiv.org/abs/2605.18966}{arXiv:2605.18966}], we demonstrated that finding Clifford symmetries can be mapped to a Graph Automorphism (GA) problem. Here, we provide an algorithm to obtain such symmetries on general qudit systems, that works on the principle of encoding Clifford invariants of a Hamiltonian onto properties of a graph. Labelling Hamiltonian terms as vertices, a permutation of such vertices that respects the Clifford invariants (a GA) is both a valid Clifford, and a symmetry up to phase correction checks. We test this on multiple physical models and discuss the scaling with respect to the number of qudits and Pauli strings, as well as various strategies for optimisation in different regimes. We further show that the graph automorphism representation of Clifford symmetries can be expanded to open quantum systems.
\end{abstract}
\maketitle

\section{Introduction}

The study of symmetry in quantum systems is vital to almost all phenomena in physics. Symmetries reduce complexity, and their knowledge is vital in determining properties such as the ground state energy \cite{Gibbs2025,Yoo_symmetry_2025}, the dynamics \cite{Tran_faster_2021}, and thermodynamics \cite{Vidmar_2016, DAlessio2016} of a model. The breaking of a symmetry can further yield exotic phenomena, such as magnetic domains and superconductivity \cite{sachdev2023quantum}, and emergent symmetries can lead to localization effects \cite{AnushyaLIOM2015, Shtanko2025}. 

Symmetries can split the Hilbert space of a quantum system into disjoint sectors \cite{Gunderman2024}, and in more constrained settings this yields Hilbert-space fragmentation, where the dynamics breaks into disconnected subspaces \cite{Sala2020HSF,MoudgalyaMotrunich2022HSF}. Knowledge of such symmetries is valuable not only for analytical understanding but also for numerical methods, where symmetry-adapted bases and block-diagonalization can dramatically reduce quantum and classical computational cost \cite{SchmitzJohri2020SymmetryED,Alvarez2012SU2DMRG,MoudgalyaMotrunich2023NumericalSymmetries, paulson_2021_simulating}. Additionally, hidden symmetries in open systems can be exploited for noise robust long-range entanglement \cite{DuttaCoherence2020}. 

Symmetries are usually discovered from inspection of the Hamiltonian. However, in some cases, they are far from obvious \cite{Tasaki1992, Yangpairing1989, SZRIFTGISER2021108076, Lootens2023, Pauli1926, CarigliaHidden2014}.
Algorithmic approaches to obtain symmetries in quantum systems are limited to Pauli symmetries \cite{Gunderman2024, bravyi2017tapering}, or become computationally expensive at increasing system sizes \cite{LaBorde_quantum_2022, Moudgalya_symmetries_2023}. 
In this work, we are interested in the larger class of Clifford symmetries, those efficiently implementable on a classical computer \cite{gottesman_class_1996, gottesman_surviving_nodate, aaronson_improved_2004, Chapman2020characterizationof, JonesIntegrable2022}. As of today, there is no algorithm that can find them. Here, we fill this gap.

In Ref.~\cite{main}, we showed that finding Clifford symmetries can be efficiently represented as a Graph Automorphism (GA) problem. In this work, we present the algorithm to solve this problem in detail, exploiting available invariants and heuristics. We further analyse these symmetries on multiple systems, extending the Hamiltonian qubit representation of Ref.~\cite{main} to Liouvillian systems on qudits. In other words, given any finite open or closed, arbitrary spin system \cite{simon2026error}, our algorithm will find its Clifford symmetries. Furthermore, at every step we use the symplectic representation of qudit operators, which enables efficient classical representation throughout.

This article is arranged as follows. We introduce the problem in Sec.~\ref{app:clifford_summary}, detailing the representation of Pauli Hamiltonians by a tableau, and operations of Clifford circuits via symplectic matrices. In Sec.~\ref{app:clifford_symm} we discuss how a Clifford symmetry can be described in such a representation as either a symplectic map, or a permutation of Paulis, and extend the approach to open quantum systems. This permits a graph representation of the Hamiltonian (Sec.~\ref{app:building_graph}), whereby a symmetry can be expressed as a GA. 
%After discussing the role of phases in Sec.~\ref{app:phase_correction} 
We then show in Sec.~\ref{app:algorithm} how to encode Clifford invariants into the GA problem, describing an efficient algorithm to compute the symmetry. Sec.~\ref{app:numerical_results} includes several examples, highlighting algorithmic improvements that are obtained from e.g. colour-coding of the graph. We conclude with a discussion of the potential applications of this method in Sec.~\ref{app:conclusion}. Python code for our approach is provided in \cite{sympleq}.

\section{Clifford formalism and symplectic spaces}\label{app:clifford_summary}

We consider an $n$ qudit system, and define the Pauli operators acting on the $\alpha$-th qudit with prime \footnote{
The restriction to prime $d$ ensures that the Pauli labels form a vector space over the field $\mathbb{F}_d$, allowing the use of Gaussian elimination and symplectic linear algebra. For composite $d$ the labels instead live over the ring $\mathbb{Z}_d$, where nonzero elements need not be invertible. In many cases composite-dimensional systems may also be decomposed into smaller prime-dimensional subsystems.
}
dimension $d$ via \cite{Hostens_stabilizer_2005}

\begin{equation}\label{eq:Pauli_def}
    \hat{X}_{\alpha}|\vec{k}\rangle 
    =  
    |\vec{k} + \vec{\alpha} \rangle 
    ,\quad
    \hat{Z}_\alpha|\vec{k}\rangle 
    = 
    e^{2 \pi i \frac{k_\alpha}{d}} |\vec{k}\rangle
    .
\end{equation}
Here, $\vec{k} \in \mathbb{Z}^{n}_d$, with $\mathbb{Z}_d$ the field of (prime) integers $\mod{d}$ \cite{dehaene_clifford_2003, Hostens_stabilizer_2005, galois_qudits_2026}, $\vec{\alpha}$ is the unit vector of length $n$ such that $k_\alpha = \vec{k} \cdot \vec{\alpha} \in \mathbb{Z}_d$ is the $\alpha$-th element of $\vec{k}$. We will assume a fixed dimension $d$ for all qudits in the main text, and in Appendix~\ref{app:mixed_d} describe the generalisation to mixed qudit dimensions. The starting point for the Clifford formalism is the representation of a Pauli string as a vector $\vec{p} \in \mathbb{Z}_d^{2n}$. In our convention, $\vec{p}$ is a row vector, and the first (last) $n$ entries are the powers of the Pauli $\hat{X}_\alpha$ ($\hat{Z}_\alpha$) operators on each site. We thus have $\vec{p} = [\vec{x}_i \, | \, \vec{z}_i]$ with $\vec{x},\vec{z} \in \mathbb{Z}^{n}_d$. We write an arbitrary Pauli string in operator form as $\hat{P}(\vec{p})$, such that
\begin{equation}
    \hat P(\vec p)
    =
    \bigotimes_{\alpha=1}^n \hat X_\alpha^{x_\alpha}\hat Z_\alpha^{z_\alpha}.
\end{equation}
A Hamiltonian on $n$-qudits can then be expressed as a weighted sum over Paulis $\hat{P}$ via
\begin{equation}\label{eq:pauli_sum}
    \hat{H} 
    = 
    \sum_{i=1}^M c_i e^{{\rm i} \pi \frac{\eta_i}{d}} \hat{P}(\vec{p}_i)
    \quad
    \text{and}
    \quad
    \underline{H}
    =
    \begingroup
    \setlength{\arraycolsep}{3pt} % <- try 2pt, 1.5pt, ...
    \left[
    \begin{array}{@{}c@{}|@{}c@{}|@{}c@{}}
        \smash{\overbrace{\vphantom{\vec{p}_M}\hphantom{c_M}}
            ^{\vphantom{\vec{\eta}\underline{p}}\smash{\vec{c}}}}
        &
        \smash{\overbrace{\vphantom{\vec{p}_M}\hphantom{\vec{p}_M}}
            ^{\vphantom{\vec{\eta}\underline{p}}\smash{\underline{p}}}}
        &
        \smash{\overbrace{\vphantom{\vec{p}_M}\hphantom{\eta_M}}
            ^{\vphantom{\vec{\eta}\underline{p}}\smash{\vec{\eta}}}}
        \\[-3ex]
        c_1 & \vec{p}_1 & \eta_1 \\
        \vdots & \vdots & \vdots \\
        c_M & \vec{p}_M & \eta_M
    \end{array}
    \right]
    \endgroup
    .
\end{equation}

The tableau $\underline{H} \cong{\hat{H}}$ is thus fully defined by a vector of coefficients $\vec{c}$, an exponent matrix $\underline{p}$ (where each row describes a Pauli string $\vec{p}_i$), and a vector of phases $\vec{\eta}$, where $\eta_i\in\mathbb Z_{2d}$ are tracked modulo $2d$ \cite{main, Hostens_stabilizer_2005}. 
%An example tableau for the Ising chain on 3 qubits is shown in Fig.~\ref{fig:ising_eg}(a). 
In Sec.~\ref{sec:pauli_lindblad} we show how this representation is generalised to open quantum systems. This requires doubling the number of qudits, which is classically efficient with the tableau representation employed here.

The power of the Clifford formalism goes beyond an efficient representation of Hamiltonians and Paulis. Rather, it lies in the efficient operations that we may perform on them without leaving this representation. In Hilbert space, operations require unitaries of size $2^n \times 2^n$, whereas in the Clifford representation, a subset of these operations can be performed by symplectics of size $2n \times 2n$. This allows to define the classically efficient subgroup of possible quantum operations, and enables a wealth of applications, from quantum error correcting codes \cite{gottesman_fault-tolerant_1999, gottesman_class_1996, gottesman_surviving_nodate, rengaswamy_logical_2020}, benchmarking and synthesis of quantum circuits \cite{Polloreno2025theoryofdirect, Mitrarai_quadratic_2022}, `nonstabilizerness' as a measure of quantum resources \cite{TurkeshiTirritoSierant2025MagicSpreading, Liu_magic_2022, Bejan_Dynamical_Magic_2024}, and pre-computation for both quantum and classical algorithms \cite{Qassim2019clifford, Huggins2024acceleratingquantum, Khosla_Clifford_2025}.

To represent these efficient operations, we first denote the symplectic form $\underline{\Omega}$ \footnote{
Notice that the definition of $\underline{\Omega}$ in Eq.~\eqref{eq:symplectic_form} is equivalent to the one given in Ref.~\cite{main} for $d=2$.
} and its lowered-reduced version $\underline{\Omega}_{\rm l}$,
\begin{equation}
\label{eq:symplectic_form}
\underline{\Omega} 
=
\begin{bmatrix} 
0 & \underline{\mathbb 1}_n 
\\[2pt] 
-\underline{\mathbb 1}_n & 0 
\end{bmatrix}
\quad
\text{and}
\quad
\underline{\Omega}_{\rm l} 
=
\begin{bmatrix} 
0 & 0 
\\[2pt] 
\underline{\mathbb 1}_n & 0 
\end{bmatrix}
,
\end{equation}
with $\underline{\mathbb 1}_n$ denoting the $n \times n$ identity matrix. $\underline{\Omega}$ defines the symplectic (inner) product of two vectors $\vec{p}_i, \, \vec{p}_j \in\mathbb{Z}_d^{2n}$
\begin{equation}\label{eq:symplectic_product_def}
\langle \vec{p}_i,\, \vec{p}_j\rangle \;=\; \vec{p}_i \underline{\Omega}\, \vec{p}_j^\top \;\in\; \mathbb{Z}_d,
\end{equation}
which can be used to determine the commutation of Pauli operators:
\begin{equation}
\hat{P}(\vec{p}_i)\,\hat{P}(\vec{p}_j)
= 
e^{
\frac{-
2 i \pi
}{
d
}
\langle \vec{p}_i,\vec{p}_j\rangle
}
\hat{P}(\vec{p}_j)\, \hat{P}(\vec{p}_i)
.
\end{equation}
As required, for commuting Pauli string operators $P(\vec{p}_i)$ and $P(\vec{p}_j)$ the symplectic product $\langle \vec{p}_i, \vec{p}_j\rangle$ vanishes.

Clifford gates act upon Paulis via right-side \footnote{We note this convention infers simply the interpretation of Paulis as rows rather than columns of the matrix $\underline{p}$} multiplication of their corresponding symplectic matrix $\underline{F} \in \mathbf{Sp}(2n, \mathbb{Z}_d)$ \cite{dehaene_clifford_2003, Hostens_stabilizer_2005}, defined such that $ \underline{F}\, \underline{\Omega} \,  \underline{F}^\top = \underline{\Omega}$. For instance, a Pauli $\vec{p}_i$ can be mapped to another Pauli $\vec{p}_j$ via
\begin{equation}\label{eq:symplectic_action}
    \vec{p}_j = \vec{p}_i\underline{F}.
\end{equation}
Simultaneously, the phase $\eta_i$ of the Pauli $\hat{P}(\vec{p}_i)$ is updated upon action of $\underline{F} $ by \cite{Hostens_stabilizer_2005}
\begin{equation}\label{eq:phase_update}
        \eta_i \mapsto \eta_i + [\vec{\phi} + \textrm{diag}(\mathcal{R}_1(\underline{F}))] \cdot \vec{p}_i + \vec{p}_i \mathcal{R}_2(\underline{F}) \vec{p}_i^\top  \, \, \, \textrm{mod}\, 2d ,
\end{equation}
where $\underline{\Omega}_{\rm l}$ is in Eq.~\eqref{eq:symplectic_form}.
Furthermore, we have defined $
\mathcal{R}_1(\underline{F}) 
= 
(\underline{F}\,\underline{\Omega}_{\rm l} \, \underline{F}^\top)
$, and $
\mathcal{R}_2(\underline{F}) 
= 
2\mathcal{P}_{\textrm{upps}}(\mathcal{R}_1(\underline{F}) 
+ 
\mathcal{P}_{\textrm{diag}}(\mathcal{R}_1(\underline{F}))
$ with $
\mathcal{P}_{\textrm{upps}}(\underline{A})
$ the upper right triangle of a matrix $\underline{A}$ and $\mathcal{P}_{\textrm{diag}}(\underline{A})$
the matrix of diagonals of $\underline{A}$. Finally, $\vec{\phi}$ is a phase vector associated with the Clifford gate, such that a gate $\hat{G}$ is defined in the tableau representation via the pair $\hat{G} \cong (\underline{F}, \vec{\phi})$. 

Usually, Clifford gates are identified only by their symplectic matrices $\underline{F}$, yet they can have different phase vectors $\vec{\phi}$, resulting in different operator representations in Hilbert space. In this work, phases are crucial to identify symmetries of the Hamiltonian, and as such are carefully tracked. The action of a phase $\vec{\phi}$ in Eq.~\eqref{eq:phase_update} can be seen as a Pauli gate $\hat{G}_{\vec{p}}$ that acts on an operator $\hat{O}$, yields $\hat{G}_{\vec{p}}^{\dagger}\hat{O}\hat{G}_{\vec{p}} = \hat{P}(\vec{p})^\dagger \hat{O} \hat{P}(\vec{p})$. The action of a Pauli gate $\hat{G}_{\vec{p}}$ on a Pauli (Pauli sum) changes only the (relative and/or absolute) phase. The symplectic is thus the identity, and a Pauli gate is represented by $\hat{G}_{\vec{p}} \cong (\underline{\mathbb{1}}_{2n}, 2 \vec{p}\underline{\Omega})$ \cite{Hostens_stabilizer_2005}. Conjugation of a Pauli gate with any other gate thus yields an identical symplectic and altered phase vector. This will be exploited below in Sec.~\ref{app:phase_correction} for correcting the phase of a candidate symmetry.

The action of $\underline{F}$ in Eq.~\eqref{eq:symplectic_action} may be understood intuitively, as the rows of $\underline{F}$ are the images of their respective Pauli component. That is, in the Pauli basis, the first row is the image of $x_1$, the second the image of $x_2$, the $(n+1)^{\rm th}$ is the image of $z_1$, and so on. Notably, this fully defines the action on any symplectic basis $\{\vec{u}_i, \, \vec{v}_i\}_{i \in [1,\, n]}$, defined such that $\langle \vec{u}_i, \, \vec{v}_j\rangle = \delta_{ij}$ and $\langle \vec{u}_i,\, \vec{u}_j\rangle = \langle \vec{v}_i,\, \vec{v}_j \rangle = 0$. 

The action of a Clifford has a few properties which we will exploit in the algorithm below. The first is that the Clifford group normalises the Pauli group, i.e., a Clifford maps one Pauli to another. The second is that it leaves the magnitude of the coefficients $c_i$ in Eq.~\eqref{eq:pauli_sum} unchanged, though may alter phases [see Eq.~\eqref{eq:phase_update}] in discrete amounts as tracked by $\eta_i$. This property greatly limits the number of possible automorphisms that we need to check when searching for symmetries (see Appendix~\ref{app:search_scheme}), as we do not need to search for mappings between Paulis with different coefficients. The third property is that the symplectic product between any two Paulis is left invariant under the action of a Clifford:
\begin{equation}\label{eq:adj_is_inv}
    \langle \vec{p}_i, \vec{p}_j\rangle = \langle \vec{p}_i\underline{F}, \vec{p}_j\underline{F}\rangle.
\end{equation}
Finally, as a consequence of linearity, a Clifford leaves the dependency structure of Paulis invariant. That is, given the linear combination $\vec{p}_i = \vec{p}_1 + \vec{p}_2$ (equiv. $\hat{P}(\vec{p}) \propto \hat{P}(\vec{p}_1)\hat{P}(\vec{p}_2)$), applying a Clifford $\underline{F}$ to $\vec{p}_{i} \mapsto \vec{p}_{i'} = \vec{p}_i \underline{F} = \vec{p}_{1'} + \vec{p}_{2'}$ is equivalent to applying it to the basis vectors $\vec{p}_{1'} = \vec{p}_1 \underline{F}$  and $\vec{p}_{2'} = \vec{p}_2 \underline{F}$ first, and summing them. In operator form, $\hat{G}^{\dagger} \hat{P}(\vec{p}) \hat{G} = \hat{G}^{\dagger} \hat{P}(\vec{p}_{1}) \hat{G} \hat{G}^{\dagger} \hat{P}(\vec{p}_{2}) \hat{G}$.

%These properties impose limitations on the set of Hamiltonians reachable via symplectic transformations such as in Eq.~\eqref{eq:symplectic_action}. For example, if $\underline{F} : \vec{p}_i \to \vec{p}_j$, the same symplectic can only map $\vec{p}_k \to \vec{p}_l$ if $\langle \vec{p}_i, \vec{p}_k\rangle = \langle \vec{p}_j, \vec{p}_l\rangle$. The pairwise set of symplectic products of each Pauli $\langle \vec{p}_i, \vec{p}_j\rangle$ is thus invariant under the action of a Clifford, alongside the coefficient vector $\vec{c}$. 

Another important property is that we can define a symplectic map $\underline{F}$ from a set of Paulis $\underline{p}$ to another set $\underline{p}'$ by splitting each set into basis and dependent components. We define elements of the basis set $\underline{p}_{\rm b}$ as $\vec{p}_i^{\rm (b)}$, and note that the number of basis elements $n_{\rm b}$ cannot exceed $2n$. The dependent set elements are then $\vec{p}_j^{\rm (d)} = \sum_{i}^{n_{\rm b}} m_{ij} \vec{p}_i^{\rm (b)}$, where $m_{ij} \in \mathbb{Z}_d$. Then, if we define a target Hamiltonian $\underline{H}'$ with tableau $\underline{p}'$, and $
\langle \vec{p}_i^{\rm (b)}, \vec{p}_{j}^{\rm (b)} \rangle 
= 
\langle \vec{p}_{i'}^{\rm (b)}, \vec{p}_{j'}^{\rm (b)} \rangle
$ for all $i,j$, the symplectic map $\underline{F}_{\underline{H} \mapsto \underline{H}'}$ from $\underline{H} \mapsto \underline{H}'$ is given by 
\begin{equation}\label{eq:basis_map}
    \underline{F}_{\underline{H} \mapsto \underline{H}'} = \underline{p}^{-1}_{\rm b} \, \underline{p}_{\rm b}',
\end{equation}
where the suffix ``b'' indicates that the tableaus only contain the $n_b \leq 2n$ basis elements.
If the Hamiltonian is not full rank ($n_b < 2n$), either the basis can be completed, or the algorithm in Ref.~\cite{rengaswamy_logical_2020} can be used. In this case, the obtained $\underline{F}$ is not unique.

Finally, we note that the formalism can be extended to mixed-dimensional systems by embedding the phase bookkeeping in a common modulus $D=\mathrm{lcm}(d_1,d_2,\ldots,d_n)$ \cite{dongre2026mixed}. One may also exploit the Chinese-remainder decomposition to further separate coprime factors of a composite dimension~\cite{sarkar2024qudit}. To keep the presentation focused, we restrict the main text to homogeneous systems with $d_i=d$, and discuss the mixed-dimensional extension in Appendix~\ref{app:mixed_d}.

\section{Clifford Symmetries}\label{app:clifford_symm}

For a given Hamiltonian, we define a symmetry $\hat{S}$ as any non-trivial gate (that is, not the identity) that maps the Hamiltonian to itself.
This definition is equivalent to $[\hat{H},\hat{S}] = 0$.
Ref.~\cite{main} shows that, in the Clifford formalism, a symmetry must take the form of a row permutation $\underline{\Pi}$ on the Hamiltonian tableau $\underline{H}$ via
\begin{equation} \label{eq:symmetry_definition_tableau}
    \underline{H}' = \underline{\Pi}\, \underline{H}.
\end{equation}
Since the sum of the rows defines the Hamiltonian [Eq.~\eqref{eq:pauli_sum}], the order of the tableaus $\underline{H}$ and $\underline{H}'$ is inconsequential and they represent the same Hamiltonian operator.

As indicated in Eq.~\eqref{eq:symplectic_action}, a Clifford with symplectic action $\underline{F}\in \mathrm{Sp}(2n,d)$ sends $
\underline{p}\ \longmapsto\ \underline{p}\,\underline{F}
$, leaving the coefficient vector $\vec{c}$ in Eq.~\eqref{eq:pauli_sum} unchanged and updating the phase vector $\vec{\eta}$ according to Eq.~\eqref{eq:phase_update}.
Therefore, for $\underline{F}$ to be a symmetry of the Hamiltonian, it must act on the exponent matrix $\underline{p}$ as
\begin{equation}\label{eq:PiPF_intro}
    \underline{\Pi}\,\underline{p} \;=\; \underline{p}\,\underline{F},
\end{equation}
where $\underline{\Pi}$ permutes the rows of $\underline{p}$.
However, since a row reshuffling of the exponent matrix does not necessarily correspond to a compatible relabelling in the phase vector $\vec{\eta}$ and since $\underline{F}$ cannot change $\vec{c}$, not all $\underline{F}$ that satisfy Eq.~\eqref{eq:PiPF_intro} also satisfy Eq.~\eqref{eq:symmetry_definition_tableau}.
The goal of this paper consists in providing an algorithm that searches the space of permutations for those which respect the necessary Clifford invariants and can be phase corrected with a Pauli gate $\hat{G}_{\vec{p}}$ (see Sec.~\ref{app:clifford_summary}).
%The question then becomes if we can find a permutation $\underline{\Pi}$ which respects the necessary Clifford invariants? Further, can the obtained symplectic symmetry $\underline{S}$ be phase corrected with a Pauli gate (symplectic $\underline{I}_{2n}$, leaving $\underline{p}$ unaltered) such that the phase vector $\vec{\eta}$ also remains unaltered? 
If we find such a permutation, we have obtained a Clifford symmetry of the Hamiltonian.

\subsection{Extension to open quantum systems}\label{sec:pauli_lindblad}

The same Pauli-string formalism can be extended from Hamiltonians to a broad class of open-system generators. We consider a Gorini–Kossakowski–Sudarshan–Lindblad (GKSL) master equation \cite{GoriniCompletely1976, Lindblad1976} on $n$ qudits of prime local dimension $d$,
\begin{equation}\label{eq:lindblad_master_eq}
    \frac{d\hat{\rho}}{dt}
    =
    \mathcal{L}(\hat{\rho})
    =
    -\frac{\rm i}{\hbar}
    \left[
    \hat{H},\hat{\rho}
    \right]
    +
    \sum_{\beta=1}^{M_J}
    \left(
        \hat{L}_\beta \hat{\rho}\hat{L}_\beta^\dagger
        -
        \frac{1}{2}
            \left\{
            \hat{L}_\beta^\dagger \hat{L}_\beta,\hat{\rho}
            \right\}
    \right).
\end{equation}
For now we focus on the Pauli-GKSL case in which the jump operators are proportional to Pauli strings,
\begin{equation}\label{eq:jump_op_Pauli}
    \hat{L}_\beta
    =
    \sqrt{\gamma_\beta}\,
    e^{{\rm i}\pi \frac{\xi_\beta}{d}}
    \hat{P}(\vec{q}_\beta),
    \qquad
    \vec{q}_\beta \in \mathbb{Z}_d^{2n}.
\end{equation}
Since $\hat{P}(\vec{q}_\beta)$ is unitary, the overall phase $e^{i\pi \xi_\beta/d}$ cancels in
$\hat{L}_\beta \hat{\rho}\hat{L}_\beta^\dagger$, and $\hat{L}_\beta^\dagger \hat{L}_\beta = \gamma_\beta \hat{\mathbb 1}$. Thus, each dissipative term in the right-hand-side of Eq.~\eqref{eq:lindblad_master_eq} can be rewritten as
\begin{equation}\label{eq:pauli_dissipator}
    \hat{L}_\beta \hat{\rho}\hat{L}_\beta^\dagger
    -
    \frac{1}{2}
        \left\{
        \hat{L}_\beta^\dagger \hat{L}_\beta,\hat{\rho}
        \right\}
    =
    \gamma_\beta
    \left(
        \hat{P}(\vec{q}_\beta)\hat{\rho}\hat{P}(\vec{q}_\beta)^\dagger
        -
        \hat{\rho}
    \right).
\end{equation}

To represent $\mathcal{L}$ in the same spirit as Eq.~\eqref{eq:pauli_sum}, we introduce Pauli superoperators labelled by a pair of Pauli vectors,
\begin{equation*}
    \vec{s}
    =
    (\vec{p}^{\rm \, L},\vec{p}^{\rm \, R})
    \in
    \mathbb{Z}_d^{4n},
\end{equation*}
and define
\begin{equation*}
    \hat{\mathcal P}(\vec{s})[\hat{\rho}]
    =
    \hat{P}(\vec{p}^{\rm \, L})\,
    \hat{\rho}\,
    \hat{P}(\vec{p}^{\rm \, R})^\dagger.
\end{equation*}
In a vectorized (Liouville-space) representation \cite{gilchrist2011vectorization, WeimerSimulation2021} this becomes a matrix acting on $|\hat{\rho}\rangle\!\rangle$, with the precise transpose/conjugation convention depending on the chosen vectorization convention. One possible choice is
$|\hat{A}\hat{\rho} \hat{B}^\dagger\rangle \! \rangle = (\hat{A}\otimes \hat{B}^*)\,|\hat{\rho} \rangle \! \rangle$, with $|\hat{\rho} \rangle \! \rangle$ being the vectorised representation of $\hat{\rho}$ in Liouville space by column stacking $\hat{\rho} = \sum_{ij} \rho_{ij}|i\rangle \langle j| \mapsto |\hat{\rho}\rangle \! \rangle = \sum_{ij} \rho_{ij}|i, j\rangle \!\rangle$. The expectation value may then be written as $\mathrm{Tr}[\hat{O} \hat{\rho}] = \langle \!\langle \mathbb{1}| \hat{O} \hat{\rho} \rangle\! \rangle =  \langle \!\langle \mathbb{1}| \hat{O} \otimes \mathbb{1} | \hat{\rho} \rangle\! \rangle $. We then have,
\begin{equation*}
    \hat{\mathcal P}(\vec{s})|\rho \rangle \!\rangle 
    = 
    \left( \hat P (\vec p^{\rm \, L}) \otimes \hat P(\vec p^{\rm \, R})^* \right) 
    |\rho \rangle \! \rangle .
\end{equation*}
From the definitions in Eq.~\eqref{eq:Pauli_def}, complex conjugation sends
\begin{equation*}
    \hat P([\vec{x}\,|\,\vec{z}])^*
    =
    \hat P([\vec{x}\,|\,-\vec{z}]).
\end{equation*}
Thus the right sector is conjugated in the Liouville-space representation. If
$\vec p^{\rm \, L}=[\vec x^{\rm \, L}\,|\,\vec z^{\rm \, L}]$ and
$\vec p^{\rm \, R}=[\vec x^{\rm \, R}\,|\,\vec z^{\rm \, R}]$, then symplectic products between superoperator labels are
\begin{equation*}
    \langle \vec{s}_i,\vec{s}_j\rangle_{\mathcal L}
    =
    \langle \vec p_i^{\rm \, L},\vec p_j^{\rm \, L}\rangle
    -
    \langle \vec p_i^{\rm \, R},\vec p_j^{\rm \, R}\rangle .
\end{equation*}
Equivalently, one may flip the sign of the right-sector $Z$ exponents before applying the ordinary doubled symplectic form. Accordingly, $\hat{\mathcal P}(\vec{s})$ may be treated as a Pauli string like object on $2n$ qudits, with the same symplectic/Clifford machinery applied using this Liouville-space pairing.

A Pauli-GKSL generator can thus be written as a weighted sum of Pauli superoperators,
\vspace{0.1cm}
\begin{align}\label{eq:pauli_lindblad_sum}
    \hat{\mathcal{L}}
    &=
    \sum_{m=1}^{M_{\mathcal L}}
    \lambda_m
    e^{{\rm i} \pi \frac{\mu_m}{d}}
    \hat{\mathcal P}(\vec{s}_m),
    \; \;
    \underline{\mathcal L}
    &=
    \begingroup
    \left[
    \begin{array}{@{}c@{}|@{}c@{}|@{}c@{}}
        \smash{\overbrace{\vphantom{\vec{s}_{M_{\mathcal L}}}\hphantom{\lambda_{M_{\mathcal L}}}}
            ^{\vphantom{\vec{\mu}\underline{s}}\smash{\vec{\lambda}}}}
        &
        \smash{\overbrace{\vphantom{\vec{s}_{M_{\mathcal L}}}\hphantom{\vec{s}_{M_{\mathcal L}}}}
            ^{\vphantom{\vec{\mu}\underline{s}}\smash{\underline{s}}}}
        &
        \smash{\overbrace{\vphantom{\vec{s}_{M_{\mathcal L}}}\hphantom{\mu_{M_{\mathcal L}}}}
            ^{\vphantom{\vec{\mu}\underline{s}}\smash{\vec{\mu}}}}
        \\[-3ex]
        \lambda_1 & \vec{s}_1 & \mu_1 \\
        \vdots & \vdots & \vdots \\
        \lambda_{M_{\mathcal L}} & \vec{s}_{M_{\mathcal L}} & \mu_{M_{\mathcal L}}
    \end{array}
    \right]
    \endgroup,
\end{align}
with $\vec{s}_m \in \mathbb{Z}_d^{4n}$. 
Here $\vec{\lambda}$ collects the complex prefactors, $\underline{s}$ is the matrix whose rows are superoperator labels $\vec{s}_m=(\vec{p}_m^{\rm \, L},\vec{p}_m^{\rm \, R})$, and $\vec{\mu}$ stores discrete phases $\mu_m\in\mathbb{Z}_{2d}$, in direct analogy with the Hamiltonian representation. More explicitly, we can separate contributions in the expanded tableau representation in terms of the coherent (Hamiltonian) and incoherent (dissipative) terms:
\begin{align}
    \underline{\mathcal L}
    &=
    \left[
    \begin{array}{@{}c@{\;}|@{\;}cc@{\;}|@{\;}cc@{\;}|@{\;}c@{}l@{}}
        \vec \lambda
        &
        \underline{x}^{\,L}
        &
        \underline{x}^{\,R}
        &
        \underline{z}^{\,L}
        &
        \underline{z}^{\,R}
        &
        \vec \mu
        \\
        \hline
        -{\rm i}c_1/\hbar
        &
        \vec x_1
        &
        \vec 0
        &
        \vec z_1
        &
        \vec 0
        &
        \eta_1
        \\
        +{\rm i}c_1/\hbar
        &
        \vec 0
        &
        -\vec x_1
        &
        \vec 0
        &
        -\vec z_1
        &
        \eta_1-2\vec z_1\cdot\vec x_1
        \\
        \vdots
        &
        \vdots
        &
        \vdots
        &
        \vdots
        &
        \vdots
        &
        \vdots
        \\
        -{\rm i}c_M/\hbar
        &
        \vec x_M
        &
        \vec 0
        &
        \vec z_M
        &
        \vec 0
        &
        \eta_M
        \\
        +{\rm i}c_M/\hbar
        &
        \vec 0
        &
        -\vec x_M
        &
        \vec 0
        &
        -\vec z_M
        &
        \eta_M-2\vec z_M\cdot\vec x_M
        \\
        \hline
        \gamma_1
        &
        \vec x_1^{\,J}
        &
        \vec x_1^{\,J}
        &
        \vec z_1^{\,J}
        &
        \vec z_1^{\,J}
        &
        0
        \\
        -\gamma_1
        &
        \vec 0
        &
        \vec 0
        &
        \vec 0
        &
        \vec 0
        &
        0
        \\
        \vdots
        &
        \vdots
        &
        \vdots
        &
        \vdots
        &
        \vdots
        &
        \vdots
        \\
        \gamma_{M_J}
        &
        \vec x_{M_J}^{\,J}
        &
        \vec x_{M_J}^{\,J}
        &
        \vec z_{M_J}^{\,J}
        &
        \vec z_{M_J}^{\,J}
        &
        0
        \\
        -\gamma_{M_J}
        &
        \vec 0
        &
        \vec 0
        &
        \vec 0
        &
        \vec 0
        &
        0
    \end{array}
    \right]
    \begin{array}{@{}l@{}}
        \\[-0ex]
        \left.
        \begin{array}{@{}c@{}}
            \vphantom{-{\rm i}c_1/\hbar} \\
            \vphantom{+{\rm i}c_1/\hbar} \\
            \vphantom{\vdots} \\
            \vphantom{-{\rm i}c_M/\hbar} \\
            \vphantom{+{\rm i}c_M/\hbar}
        \end{array}
        \right\}
        \underline{\mathcal L}_{\rm coh}
        \\[1.0ex]
        \left.
        \begin{array}{@{}c@{}}
            \vphantom{\gamma_1} \\
            \vphantom{-\gamma_1} \\
            \vphantom{\vdots} \\
            \vphantom{\gamma_{M_J}} \\
            \vphantom{-\gamma_{M_J}}
        \end{array}
        \right\}
        \underline{\mathcal L}_{\rm inc}
    \end{array}.
\end{align}
Note that we have kept the $\pm{\rm i}$ factors in the coherent coefficients and the $\pm$ term in the incoherent part for clarity; equivalently, these factors may be absorbed into the complex prefactors $\lambda_m$.
In this representation, each coherent Hamiltonian term $c_i e^{i\pi \eta_i/d}\hat{P}(\vec{p}_i)$ contributes two Pauli superoperator terms through the commutator,
\begin{equation}
\begin{aligned}
    -\frac{{\rm i}}{\hbar}
    \left[
        c_i e^{{\rm i}\pi\eta_i/d}\hat{P}(\vec{p}_i),
        \hat{\rho}
    \right]
    &=
    -\frac{{\rm i}c_i}{\hbar}
    e^{{\rm i}\pi\eta_i/d}
    \hat{\mathcal P}(\vec{p}_i,\vec{0})|\hat{\rho}\rangle \!\rangle
    \\
    &\quad
    +
    \frac{{\rm i}c_i}{\hbar}
    e^{{\rm i}\pi(\eta_i-2\vec z_i\cdot\vec x_i)/d}
    \hat{\mathcal P}(\vec{0},-\vec{p}_i)|\hat{\rho}\rangle \!\rangle .
\end{aligned}
\end{equation}
The additional phase in the second term follows from the canonical ordering convention $\hat{P}([\vec{x}\,|\,\vec{z}])=\hat{X}^{\vec{x}}\hat{Z}^{\vec{z}}$, for which $\hat{P}(-\vec p_i)^\dagger=e^{2{\rm i}\pi\vec z_i\cdot\vec x_i/d}\hat{P}(\vec p_i)$. Each incoherent Pauli jump contributes
\begin{equation}
    \gamma_\beta
    \left(
        \hat{P}(\vec{q}_\beta)\hat{\rho}\hat{P}(\vec{q}_\beta)^\dagger
        -
        \hat{\rho}
    \right)
    =
    \gamma_\beta\,\hat{\mathcal P}(\vec{q}_\beta,\vec{q}_\beta)|\hat{\rho}\rangle \!\rangle
    -
    \gamma_\beta\,\hat{\mathcal P}(\vec{0},\vec{0})|\hat{\rho}\rangle \!\rangle,
\end{equation}
with any discrete phases absorbed into $\vec{\mu}$ and arbitrary complex prefactors absorbed into $\vec{\lambda}$ as above.

Thus, in addition to the Hamiltonian $\underline{H}$ in Eq.~\eqref{eq:pauli_sum}, a set of GKSL superoperators in Eq.~\eqref{eq:pauli_lindblad_sum} fully characterizes the dissipative model considered. To find a symmetry of this model, we generalize the tableau to the Liouville-space representation on $2n$ qudits, with Paulis acting both on the left and on the right, and combine the coherent and incoherent superoperators in a single tableau with phases tracked together. The same permutation/symplectic symmetry machinery can then be
applied directly to the superoperator rows $\vec{s}$, with symplectic
products computed using the Liouville-space pairing defined above.
A symmetry in this doubled space should be understood as a different object to that for a closed system. Indeed, it is not in itself necessarily a symmetry of the Hamiltonian. However, it identifies invariant operator-space sectors of the open-system dynamics. 
In the following, we retain the notation of the Hamiltonian formulation for simplicity; due to the above construction, the approach follows identical lines for Pauli-GKSL open systems by  $(\vec{\lambda},\underline{s},\vec{\mu})$, exactly analogously to the Hamiltonian $(\vec{c},\underline{p},\vec{\eta})$, with the symplectic product replaced by the Liouville-space form $\langle\cdot,\cdot\rangle_{\mathcal L}$.

% Finally, we can see that the above approach may be easily generalised beyond the Pauli-GKSL case to arbitrary GKSL form dissipators (or even beyond the Markovian cases to arbitrary time-independent superoperators in terms of $x$ and $z$ components). For this it can be seen that replacing the Pauli string with a Pauli sum in Eq.~\eqref{eq:jump_op_Pauli} yields an arbitrary jump operator. Continuing exactly as above with this replacement then yields an Eq.~\eqref{eq:pauli_dissipator} where (1) different left- and right-hand operators $\hat L_{\beta_1}, \hat L_{\beta_2}$ may act in the first term, and (2) the second term may have a left hand operator acting on $\hat \rho$, rather than the density operator alone. Further, each term may have a phase $\zeta_{\beta_1} - \zeta_{\beta_2}$. This nonetheless yields a Liouvillian tableau of the form Eq.~\eqref{eq:pauli_lindblad_sum}.

Finally, the restriction to Pauli-GKSL generators is not essential. 
The simplification in Eq.~\eqref{eq:pauli_dissipator} used the fact that
a Pauli jump operator is unitary, so that $\hat L_\beta^\dagger \hat L_\beta$
is proportional to the identity. For a general GKSL jump operator, one may
instead expand
\begin{equation}
    \hat L_\beta
    =
    \sum_a \ell_{\beta a}
    e^{{\rm i}\pi \xi_{\beta a}/d}
    \hat P(\vec q_{\beta a})
\end{equation}
in the Pauli basis. Substituting this expansion into the dissipator produces
a sum of terms of the form
\begin{equation*}
    \hat P(\vec q_{\beta a})\hat\rho\hat P(\vec q_{\beta b})^\dagger,
    \quad
    \hat P(\vec q_{\beta a})^\dagger \hat P(\vec q_{\beta b})\hat\rho,
    \quad
    \hat\rho\,\hat P(\vec q_{\beta a})^\dagger \hat P(\vec q_{\beta b}),
\end{equation*}
with the corresponding coefficients and discrete phases absorbed into the
tableau. Since products of Pauli strings are again Pauli strings up to a
phase, each contribution is a Pauli superoperator of the form introduced in
Eq.~\eqref{eq:pauli_lindblad_sum}. Thus arbitrary GKSL generators, and more
generally any time-independent superoperator expanded in the Pauli
superoperator basis, can be represented by a Liouvillian tableau of the same
form.

\section{Building the graph}\label{app:building_graph}

\subsection{Simple colours: Coefficients and commutation structure}
In Ref.~\cite{main} we introduced the form of the graph representation used to find valid permutations on the Hamiltonian $\underline{H}$ in Eq.~\eqref{eq:pauli_sum}. Here, we take a more intuitive approach and build the graph for an example case, highlighting the key role of symplectic invariants and the linear dependence structure, showing that they can be efficiently represented on said graph.
We have seen that a symmetry can in general be encoded via a permutation of the rows of a Hamiltonian tableau [see Eq.~\eqref{eq:symmetry_definition_tableau}], what remains is to find an efficient approach to obtain a permutation which can be implemented through a Clifford. We must therefore encode all Clifford invariants into our search structure for $\underline{\Pi}$. 
Two such Clifford invariants, as mentioned above, are the coefficients $c_i$ [Eq.~\eqref{eq:pauli_sum}] (up to the phases in $\eta_i$), and Pauli commutations $\langle \vec{p}_i, \, \vec{p}_j\rangle$. For clarity, in the next sections we will not include the linear dependence structure or phases, which will be treated in Sec.~\ref{app:graph_augmentation} (see also Appendix~\ref{app:linear_codes}) and \ref{app:phase_correction}, respectively. 

Let us consider the example of four Paulis $\vec{p}_i$ with $i \in [1, 2, 3, 4]$. A possible permutation of this set is:
\begin{equation*}
\begin{tikzcd}[row sep=0.25em, column sep=2.6em, cells={inner sep=1pt}]
p_1 \; \arrow[r, mapsto] \; &  \; p_2 \\
p_2  \; \arrow[r, mapsto]  \; &  \; p_1 \\
p_3  \; \arrow[r, mapsto] \;  & \;  p_4 \\
p_4 \;  \arrow[r, mapsto]  \; & \;  p_3
\end{tikzcd}
\end{equation*}
One can check whether this permutation constitutes a symmetry by building the corresponding permutation matrix $\underline{\Pi}$ and symplectic $\underline{F}$ from Eq.~\eqref{eq:basis_map}, and checking Eq.~\eqref{eq:symmetry_definition_tableau} (recalling that $\vec{c}$ is unchanged by Clifford operations). However, performing this check for every permutation is prohibitively expensive for many Paulis. We thus add the following two features to our representation.

First, noting that the coefficients $\vec{c}$ cannot be altered by the action of a Clifford, we use them to colour each index, and collect these colours into the set $\mathcal{X}_V$. If $\vec{p}_1$ and $\vec{p}_2$ have the same colour/coefficient, $\mathcal{X}_V(1) = \mathcal{X}_V(2)$, the others each differ, and we see that the previous permutation
\begin{equation*}
\begin{tikzcd}[row sep=0.25em, column sep=2.6em, cells={inner sep=1pt}]
\bnum{p_1}\; \arrow[r, mapsto] \; &  \; \bnum{p_2} \\
\bnum{p_2}  \; \arrow[r, mapsto]  \; &  \; \bnum{p_1} \\
\gnum{p_3} \; \arrow[r, mapsto] \;  & \;  \rnum{p_4} \\
\rnum{p_4} \;  \arrow[r, mapsto]  \; & \;  \gnum{p_3}
\end{tikzcd}
\end{equation*}
is ruled out, as $\vec{p}_3$ and $\vec{p}_4$ can only map to themselves, $\mathcal{X}_V(3) \neq \mathcal{X}_V(4)$. We can thus use the colours to efficiently prune the search. Identifying each Pauli $\vec{p}_i$ with a vertex of a graph with colour $\mathcal{X}_V(i) = c_i$, we only try permutations that map a vertex to another with the same colour. 

Second, we add commutation information between each vertex, creating a directed graph with edges labelled by
\begin{equation}\label{eq:sp_on_edges}
    \mathcal{X}_E(i\mapsto j)
    =
    E_{ij}
    =
    \langle \vec{p}_i,\vec{p}_j\rangle 
    \bmod{d}.
\end{equation}
(For $d=2$ this may be viewed as an undirected edge colour since $E_{ij}=E_{ji}$.)  The key point is that these edge labels impose pairwise compatibility constraints on a permutation, analogous to how coefficient colours impose single-vertex compatibility. Concretely, if a permutation maps vertices $i\mapsto \Pi(i)$ and $j\mapsto \Pi(j)$, then it must also preserve the edge colour between them:
\begin{equation}\label{eq:edge_color_constraint}
    \mathcal{X}_E(i\mapsto j)
    =
    \mathcal{X}_E\bigl(\Pi(i) \mapsto \Pi(j)\bigr)
    \qquad \text{for all } i,j.
\end{equation}
This provides another pruning rule. For example, suppose we have already fixed that $1\mapsto a$ for some index $a$. For any other vertex $j$, the image $\Pi(j)$ cannot be chosen freely from the coefficient-matched colour class: it must additionally reproduce the commutation pattern with the already-mapped vertex,
\begin{equation}
    E_{1j} \;=\; E_{a,\Pi(j)}.
\end{equation}
Thus, once one vertex is mapped, the outgoing (and incoming) edge colours from that vertex act like a ``fingerprint'' that must be matched by the image vertex. In practice, as more vertices are assigned, the constraints in Eq.~\eqref{eq:edge_color_constraint} rapidly shrink the remaining candidate images, since each newly fixed pair $(i,j)$ enforces one more commutation-colour constraint.

Continuing the above example, suppose the commutation data is encoded onto edges via Eq.~\eqref{eq:sp_on_edges}, with
$
E_{12}=0,\;
E_{13}=E_{23}=1,\;
E_{14}=E_{24}=0,\;
E_{34}=1,
$
and $E_{ij}=E_{ji}$ (qubits).  Then the permutation $\underline{\Pi}= 1 \leftrightarrow 2$ is compatible with the edge colours, because for every pair $\{i,j\}$ we have
$
E_{ij}=E_{\Pi(i)\,\Pi(j)}.
$
In particular, swapping $1$ and $2$ preserves the entire commutation graph to the other vertices: $E_{13}=E_{23}$ and $E_{14}=E_{24}$,
\begin{equation*}
\begin{tikzpicture}[
    vtx/.style={circle, draw, inner sep=1.5pt, minimum size=5mm},
    lab/.style={font=\scriptsize},
    baseline={(current bounding box.center)}
]
% ---------- Left: original graph ----------
\begin{scope}
  \node[vtx] (a1) at (0,0)    {\bnum{1}};
  \node[vtx] (a2) at (2,0)    {\bnum{2}};
  \node[vtx] (a3) at (0,-1.6) {\gnum{3}};
  \node[vtx] (a4) at (2,-1.6) {\rnum{4}};

  % all 6 edges (0 = solid, 1 = dashed)
  \draw[thick]        (a1)--(a2) ;% node[midway,above,lab] {$0$}; % E12
  \draw[thick,dashed] (a1)--(a3);%  node[midway,left, lab] {$1$}; % E13
  \draw[thick]        (a1)--(a4);%  node[midway,sloped,above,lab] {$0$}; % E14
  \draw[thick,dashed] (a2)--(a3);%  node[midway,sloped,above,lab] {$1$}; % E23
  \draw[thick]        (a2)--(a4) ;% node[midway,right,lab] {$0$}; % E24
  \draw[thick,dashed] (a3)--(a4);%  node[midway,below,lab] {$1$}; % E34
\end{scope}

% ---------- Middle: mapsto ----------
\node at (3.15,-0.8) {$\mapsto$};

% ---------- Right: permuted graph ----------
\begin{scope}[xshift=4.4cm]
  % same positions, but labels permuted by Pi=(12)
  \node[vtx] (b1) at (0,0)    {\bnum{2}}; % image of 1
  \node[vtx] (b2) at (2,0)    {\bnum{1}}; % image of 2
  \node[vtx] (b3) at (0,-1.6) {\gnum{3}};
  \node[vtx] (b4) at (2,-1.6) {\rnum{4}};

  % same complete edge-coloured pattern (must match under Pi)
  \draw[thick]        (b1)--(b2);%  node[midway,above,lab] {$0$};
  \draw[thick,dashed] (b1)--(b3);%  node[midway,left, lab] {$1$};
  \draw[thick]        (b1)--(b4);%  node[midway,sloped,above,lab] {$0$};
  \draw[thick,dashed] (b2)--(b3);%  node[midway,sloped,above,lab] {$1$};
  \draw[thick]        (b2)--(b4);%  node[midway,right,lab] {$0$};
  \draw[thick,dashed] (b3)--(b4) ;% node[midway,below,lab] {$1$};
\end{scope}
\end{tikzpicture}.
\end{equation*}

More formally, we consider a fully connected \footnote{in practice it is often useful to create a sparse graph, in which case we can choose the most common edge data type, and remove it. Then no coupling encodes that data (e.g. commutation). We choose to present the fully connected case simply as a convention.} coloured directed graph $G=(V,E,\mathcal{X}_V,\mathcal{X}_E)$, where $V=[1,\, \dots,\, M]$ labels the Pauli strings and $E=V \times V$ are ordered pairs. The vertex colours $\mathcal{X}_V$ include at least (see Appendix~\ref{app:WL}) the coefficients $\vec{c}$ in Eq.~\eqref{eq:pauli_sum}. The edge labels $\mathcal{X}_E(i \mapsto j)$ are determined by the symplectic product $\langle \vec{p}_i,\vec{p}_j\rangle\pmod d$ defined in Eq.~\eqref{eq:symplectic_product_def}. For $d=2$ the symplectic form is symmetric, so one may equivalently view $E$ as the undirected complete graph with colour $\mathcal{X}_E(i \mapsto j) = \mathcal{X}_E(j \mapsto i) = \langle \vec{p}_i, \, \vec{p}_j \rangle$.

In this setting, a candidate permutation $\underline{\Pi}$ is required to preserve the vertex labels (coefficients) and the edge colours (commutation data). A valid one is given in the previous figure. However, this is not yet sufficient to guarantee a Hamiltonian symmetry, as it must also preserve (a) the linear dependence structure of the Pauli strings and (b) phase compatibility. In the next Sec.~\ref{app:phase_correction} and Appendix~\ref{app:linear_codes}, we will investigate how to enforce these through a code automorphism condition as well as additional constraints.

\subsection{Pauli dependency structure via graph augmentation}
\label{app:graph_augmentation}

As noted in Sec.~\ref{app:clifford_summary}, the linear action of Clifford gates on Pauli label vectors implies that the linear dependence structure of the set $\{\vec{p}_i\}_{i=1}^M$ in Eq.~\eqref{eq:pauli_sum} is preserved by any symmetry. Consequently, a candidate permutation $\underline{\Pi}$ must preserve not only the coefficients and commutation data encoded in the graph of Sec.~\ref{app:building_graph}, but also the dependence relations between the Pauli labels. A direct way to incorporate this information is to augment the graph with additional vertices representing minimal dependent subsets.

To illustrate the idea, we use the same four-Pauli example as above and assume there is a single dependence relation over $\mathbb{Z}_2$,
\begin{equation}\label{eq:toy_dep}
    \vec{p}_4=\vec{p}_1+\vec{p}_2
    \qquad\Longleftrightarrow\qquad
    \vec{p}_1+\vec{p}_2+\vec{p}_4=0 .
\end{equation}
The support $C=\{1,2,4\}$ is then a \emph{circuit}, i.e., a minimal dependent set. We encode this by adjoining a new vertex $v_C$ and connecting it to the Pauli vertices in $C$. This new vertex and its edges will be distinguished by a dedicated circuit-type colour and edge labels, respectively. 
%The Pauli vertices retain the coefficient colours and commutation edges introduced in Sec.~\ref{app:building_graph}. 
In this way, any admissible permutation must preserve the incidence pattern between $v_C$ and the Pauli vertices: it may map $\{1,2,4\}$ only to another circuit in the instance.
Continuing the previous example, the augmented graph becomes:

\begin{equation*}
\begin{tikzpicture}[
    vtx/.style={circle, draw, inner sep=1.2pt, minimum size=5mm},
    cvtx/.style={circle, draw, inner sep=1.2pt, minimum size=5mm, fill=green!10},
    mem/.style={thick, green!50!black},
    baseline={(current bounding box.center)}
]

% ---------- Left: original augmented graph ----------
\begin{scope}[scale=0.82]
  % vertices
  \node[vtx]  (a1) at (0,0)       {\bnum{1}};
  \node[vtx]  (a2) at (1.7,0)     {\bnum{2}};
  \node[vtx]  (a3) at (0,-1.35)   {\gnum{3}};
  \node[vtx]  (a4) at (1.7,-1.35) {\rnum{4}};
  \node[cvtx] (ac) at (0.85,1.15) {$v_C$};

  % commutation edges: 0 = solid, 1 = dashed
  \draw[thick]        (a1)--(a2);   % E12=0
  \draw[thick,dashed] (a1)--(a3);   % E13=1
  \draw[thick]        (a1)--(a4);   % E14=0
  \draw[thick,dashed] (a2)--(a3);   % E23=1
  \draw[thick]        (a2)--(a4);   % E24=0
  \draw[thick,dashed] (a3)--(a4);   % E34=1

  % circuit-membership edges
  \draw[mem] (ac)--(a1);
  \draw[mem] (ac)--(a2);
  \draw[mem] (ac)--(a4);

  % label
  \node[font=\scriptsize, green!40!black] at (0.85,1.7) {$C=\{1,2,4\}$};
\end{scope}

% ---------- Middle: mapsto ----------
\node at (3.2,-0.05) {$\mapsto$};

% ---------- Right: permuted augmented graph ----------
\begin{scope}[xshift=4.2cm, scale=0.82]
  % example relabelling (14), ignoring coefficients for illustration of dependence preservation
  \node[vtx]  (b1) at (0,0)       {\rnum{4}};
  \node[vtx]  (b2) at (1.7,0)     {\bnum{2}};
  \node[vtx]  (b3) at (0,-1.35)   {\gnum{3}};
  \node[vtx]  (b4) at (1.7,-1.35) {\bnum{1}};
  \node[cvtx] (bc) at (0.85,1.15) {$v_C$};

  % commutation pattern preserved under relabelling
  \draw[thick]        (b1)--(b2);
  \draw[thick,dashed] (b1)--(b3);
  \draw[thick]        (b1)--(b4);
  \draw[thick,dashed] (b2)--(b3);
  \draw[thick]        (b2)--(b4);
  \draw[thick,dashed] (b3)--(b4);

  % circuit-membership edges
  \draw[mem] (bc)--(b1);
  \draw[mem] (bc)--(b2);
  \draw[mem] (bc)--(b4);

  \node[font=\scriptsize, green!40!black] at (0.85,1.7) {$\Pi(C)=\{4,2,1\}=C$};
\end{scope}

\end{tikzpicture},
\end{equation*}
confirming that the permutation $\underline{\Pi}= 1 \leftrightarrow 2$ preserves the Pauli dependencies.

This example also illustrates that the dependence structure is basis-independent. For instance, temporarily ignoring coefficient constraints, the relabelling $1\leftrightarrow 4$ preserves the circuit $\{1,2,4\}$ even though it exchanges a basis element with a dependent one. What changes is only the choice of basis used to describe the same dependence structure. 

As the systems scale up in size, the presence of \emph{minimal} in the definition of a circuit becomes essential. In this context, minimal means that it is not possible to remove vertices from a circuit while satisfying $\sum_i \vec{p}_i = 0 \bmod{d}$ for the remaining elements $\vec{p}_i$ within that circuit. If we were to encode non-minimal dependent sets as additional constraint vertices, we would record redundancies while rapidly increasing the size of the augmented graph.

However, to ensure that (1) Pauli dependencies are respected and (2) we do not miss some allowed permutation $\underline{\Pi}$ (e.g., one circuit maps to another one, but we do not include either in our graph), we \emph{must} find and use \emph{all} minimal circuits.
In practice, the circuit-augmented graph is constructed from the linear dependencies of the Paulis. Given the tableau $\underline{p}$, we first form a matroid representation $\underline{G} = [\underline{\mathbb1}_{n_b} | \underline{D}]$, with $\underline{D}$ a matrix of dependencies in the chosen basis. The columns of $\underline{G}$ represent the Pauli terms in a chosen independent coordinate basis (see Appendix~\ref{app:linear_codes}). The dependencies among Pauli terms are then obtained from the nullspace of $\underline{G}^{T}$: a vector satisfying $\vec{a}\underline{G}^\top = 0$ specifies a subset of Pauli terms whose linear combination vanishes. 

For qubit systems $d=2$, we enumerate all nonzero binary combinations of a basis for this nullspace, convert each dependency vector into its support, and sort these supports by size. We then retain only those supports which do not contain any previously retained smaller dependent support. The resulting inclusion-minimal dependent supports are precisely the minimal circuits. 
In the augmented graph, one auxiliary vertex is added for each such minimal circuit, with incidence edges connecting it to the Pauli-term vertices in its support. 
For $d>2$, the same construction generalises by replacing binary supports with coefficient-labelled dependencies over $\mathbb Z_d$. A circuit is then an inclusion-minimal support of a nonzero dependency relation, with nonzero coefficients $a_i\in\mathbb Z_d$ retained as edge labels in the augmented graph. \footnote{The coefficients of a dependency relation are defined only up to multiplication by a common nonzero element of $\mathbb Z_d$. In the graph representation these labels should therefore be compared up to this common rescaling.} Thus the qubit case records only the support of each dependency, while the qudit case additionally records the finite-field coefficients
of the corresponding relation.

In general, the number of circuits can be very large. Indeed, with the nullity \cite{Roman2008AdvancedLinearAlgebra} of $\underline p$ being
\begin{equation*}
    M-\operatorname{rank}(\underline p),
\end{equation*}
in the worst case scenario the number of minimal circuits grows exponentially in this quantity. Therefore, explicitly constructing one auxiliary vertex for every circuit may be computationally infeasible, both because the circuits must first be found and because the resulting augmented graph may be too large to be even stored. A possible mitigation is to store a subset of circuits (e.g., smaller ones), potentially limiting the symmetries that can be found. 

Alternatively, we implement in Appendix~\ref{app:linear_codes} another graph representation that avoids explicit circuit enumeration. This approach uses a graph construction without auxiliary vertices, at the cost of performing an (efficient) check of Pauli dependencies for candidate permutations, rejecting them if necessary. For the numerical results in Sec.~\ref{app:numerical_results}, we use this check on candidate permutations unless otherwise stated, rather than adding auxiliary vertices. We choose to present circuits here due to their clarity.

\subsection{Phase correction}\label{app:phase_correction}

As mentioned in Sec.~\ref{app:clifford_summary}, a (symmetry) Clifford unitary $\hat{S}$ is specified by
\begin{equation}
    \hat{S} \cong (\underline{S},\vec{\phi}).
\end{equation}
For $\hat{S}$ to be a symmetry, on top of all constraints discussed in the previous sections, the phase vector $\vec{\phi}$ must ensure that the $\vec{\eta}$ in Eq.~\eqref{eq:pauli_sum} are preserved. 
Although it is in principle possible to encode phase information in the graph directly, as phases are not Clifford invariant, any search scheme for a permutation $\Pi$ that tracks phase information requires expensive computational checks that would considerably slow down our GA finding algorithm.
As such, after a candidate $\underline{S}$ is found, we check whether the phase update resulting from Eq.~\eqref{eq:phase_update} is compatible with the permutation $\underline{\Pi}$ associated to $\underline{S}$, see Eq.~\eqref{eq:PiPF_intro}. 

%Eqs.~\eqref{eq:symplectic_action} and Eq.~\eqref{eq:phase_update} leaves $\hat H$ invariant. For a fixed symplectic $\underline{S}$, the permutation of the exponent matrix $\underline{p}$ is determined from Eq.~\eqref{eq:PiPF_intro}, so the remaining task is to match tableau phases.

To do so, we first choose a reference phase vector $\vec{\phi}_0$ compatible with $\underline{S}$ by solving the parity constraint \cite{Hostens_stabilizer_2005}
\begin{equation}\label{eq:phase_vector_validity}
    (d - 1)\,\mathrm{diag}(\underline{S}\, \underline{\Omega}_l \,\underline{S}^\top) - \vec{\phi}_{0}
    = 0 \bmod{2}.
\end{equation}
This equation does not determine $\vec{\phi}_{0}$ uniquely, but yields a valid $\vec{\phi}_0$ for the given $\underline{S}$. The role of $\vec{\phi}_0$ is only to provide a starting point that is consistent with $\underline{S}$; the remaining phase freedom is then used to correct any residual mismatch. 
Indeed, let us call $\vec{\eta}_0$ the phases of the resulting Hamiltonian after $(\underline{S},\vec{\phi}_0)$ is applied [Eqs.~\eqref{eq:symplectic_action} and \eqref{eq:phase_update}]. If, taking into account the permutation $\underline{\Pi}$ in Eq.~\eqref{eq:PiPF_intro}, $\vec{\eta}_0 = \vec{\eta}$, then $(\underline{S},\vec{\phi}_0)$ is a valid symmetry. If not, we determine $\vec{\phi}$ solving (when possible) the following equations
\begin{equation}\label{eq:affine_phase_row_aligned}
    \vec{\eta}
    =
    \vec{\eta}_{0}
    +
    \underline{p}\,(\vec{\phi}-\vec{\phi}_0)
    \bmod{2d}.
\end{equation}
To do so, we use Eq.~\eqref{eq:phase_vector_validity} to write $\vec{\phi} - \vec{\phi}_0 = 2\vec{u}$ with $\vec{u}\in\mathbb{Z}_d^{2n}$ (which is equivalent to saying that the phase can only be corrected by the action of a Pauli gate, see Appendix~\ref{app:lemma}). The vector $\vec{u}$ is then found from Eq.~\eqref{eq:affine_phase_row_aligned} as $2\underline{p}\vec{u}= \vec{\eta} - \vec{\eta}_{0} \bmod{2d}$. 

Thus, phase correction reduces to an evenness check on $\vec{\eta} - \vec{\eta}_{0}$ modulo $2d$, followed by a single linear solve over $\mathbb{Z}_d$. From $\vec{u}$ we then reconstruct $\vec{\phi}$ to obtain the Clifford symmetry $\hat{S}\cong(\underline{S},\vec{\phi})$.

\subsection{Finding a graph automorphism}\label{app:algorithm}

We have seen that the problem of finding Clifford symmetries can be mapped to a GA plus a phase correction step. In the following, we will discuss two approaches to obtain candidate automorphisms of the above defined graph. The first, used throughout most of the main text, exploits the \texttt{igraph} \cite{csardi2006igraph} package, which itself uses the \texttt{bliss} \cite{junttila2007engineering} algorithm to obtain a set of generators of the automorphism group of a graph, i.e., the primitive GAs from which any GA of the graph can be built. This method requires a vertex-coloured graph, while edges must be label-free. Thus, we modify the graph described above (which also has edge colours for either $d >2$ or including auxiliary vertices) by inserting into each edge a new coloured (by the edge's label) vertex, and removing all edges' colours. From here, it is not guaranteed that the generators themselves have phase corrections (or satisfy the code automorphism check if using the method of Appendix~\ref{app:linear_codes}), and thus combinations of the generators may also be needed for possible phase corrections.

The second approach, described in detail in Appendix~\ref{app:search_scheme}, is a search through possible permutations of the graph via a Minimum Remaining Values (MRV) heuristic \cite{BacchusVanRun1995}, for which the code is available in Ref.~\cite{sympleq}. This is essentially a search over possible vertex permutations with a heuristic based on assigning a tentative value to whichever vertex has the fewest possible options remaining. If the search finds that no options remain or phase correction is not possible after a candidate permutation is found, it backtracks. If using the method of Appendix~\ref{app:linear_codes}, it also backtracks if the permutation code automorphism check fails.

Both approaches greatly benefit from additional vertex colouring, as it reduces the potential number of permutation candidates. We thus begin by performing a one-dimensional Weisfeiler-Leman (WL) refinement \cite[Appendix~\ref{app:WL}]{WL_refinement}, which uses edge data to increase the number of vertex colours. We note an additional approximate method for further increasing the number of vertex colours in Appendix~\ref{app:heuristic}, which provides faster convergence to the desired symmetry in several scenarios.

We note that in each case we can find \emph{all} Clifford symmetries of a Hamiltonian. For this, the \texttt{bliss} implementation is far more efficient, as it can return all generators of the automorphism group. In cases where each of these are themselves a symmetry one obtains all symmetries efficiently, however if this is not the case, combinations of generators must be checked. In the MRV search all symmetries are found by simply completing the full search scheme. In the following we show results with an early stopping criteria, and for a single symmetry of each model.

\section{Numerical results}\label{app:numerical_results}

\begin{figure*}
    \centering
    \includegraphics[width=0.99\linewidth]{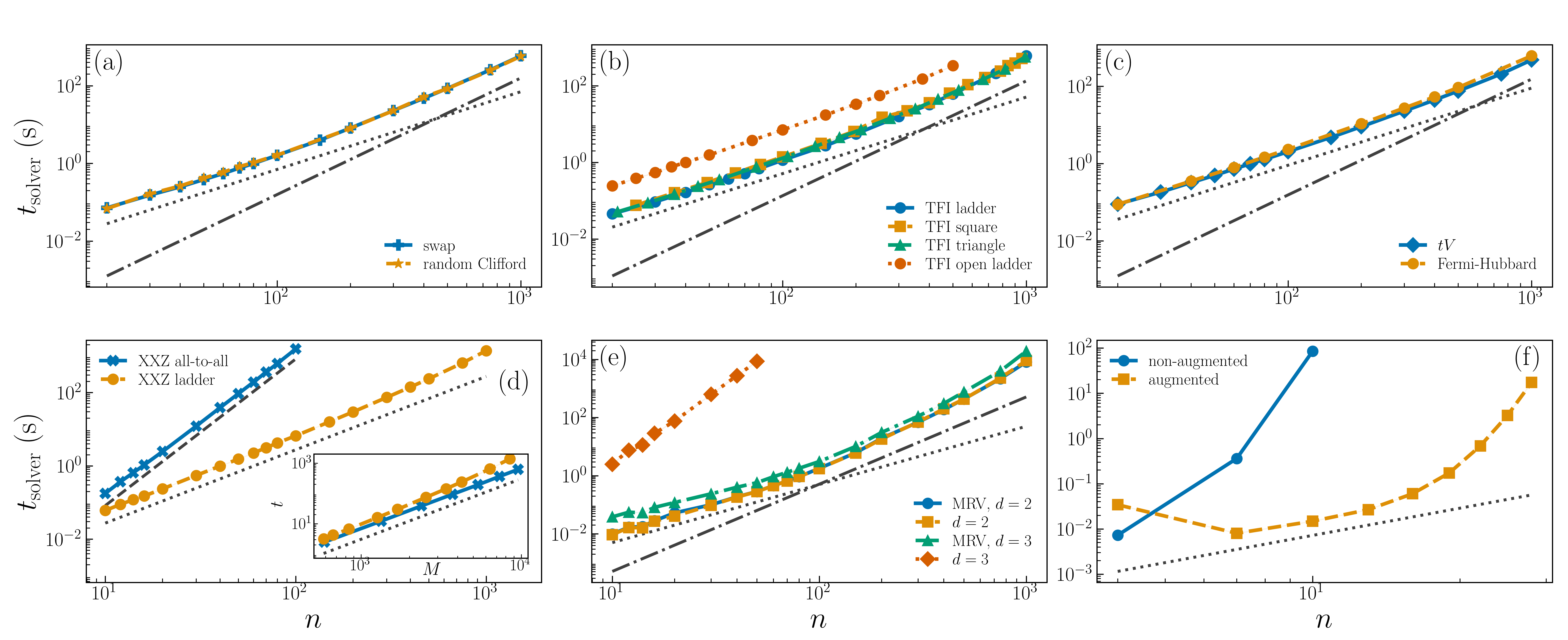}
    \caption{
    Time taken to find Clifford symmetries on the models in Appendix~\ref{app:hamiltonians}.
        (a) Random Pauli Hamiltonian with an injected SWAP (blue) and RCC (yellow, see main text) symmetry.
        (b) TFI model on differing geometries for open and closed systems.
        (c) Fermionic models after Jordan-Wigner transformation (see Appendix~\ref{app:hamiltonians}): the disordered $tV$ ladder (blue), and a Fermi-Hubbard chain (yellow).
        (d) Heisenberg XXZ with local fields on each site for a ladder (yellow) and all-to-all connected (blue) geometries. Scaling with number of Paulis $M$ shown in inset. 
        (e) Highly symmetric Toric code ($c_g = 0)$ for $d=2, 3$, comparing \texttt{bliss} method to MRV search in Appendix~\ref{app:search_scheme}.
        (f) Toric code ($c_g \neq c_z \neq c_x$), comparing augmented graph to non-augmented (with code automorphism check).
        In all panels, grey dotted (dash-dotted) [dashed] lines indicate quadratic (cubic) [quartic] scaling.
        }
    \label{fig:times_all_models}
\end{figure*}

In this section, we use our GA approach to find Clifford symmetries on multiple Hamiltonians. In Sec.~\ref{app:symmetry_finding} we will demonstrate how the time taken to find a symmetry in our approach scales with system size, and in Sec.~\ref{app:symmetry_extension} we show an example of how results of our numerical scheme may be interpreted, to enable analytical representation of the symmetry.
All Hamiltonians used are explicitly written in Appendix~\ref{app:hamiltonians}.

\subsection{Finding Clifford symmetries}\label{app:symmetry_finding}

In this section, we run benchmarking tests on several Hamiltonians. Each model is described in detail in Appendix~\ref{app:hamiltonians}. We use the \texttt{bliss} \cite{csardi2006igraph, junttila2007engineering} implementation, finding the full automorphism group of the (non-augmented) graph and then lift the automorphisms to symmetries via the code automorphism check (see Appendix~\ref{app:linear_codes}) and phase correction step. %In Appendix~\ref{app:search_scheme} we show an explicit search algorithm, and an example application to the Toric code Hamiltonian on qubits and qutrits.

The first model we analyse is a random Pauli Hamiltonian with an injected Clifford symmetry $\hat{S}$. We inject the symmetry by  taking a random Pauli, including in $\hat{H}$ the orbit \footnote{Choosing a random Pauli $\vec{p}$, the orbit is the set $\{pF^i\}_{i \in [0, n_F]}$ such that $pF^{n_F} = p$. We then have that the total orbit, acted upon by any power of $F$ is invariant.} of the Pauli upon action of the symmetry, and repeating the process until the desired number of terms $M$ is (approximately) reached. Afterwards, we `scramble' this Hamiltonian with a large random Clifford circuit (RCC) to ensure that the symmetry is well hidden. Fig.~\ref{fig:times_all_models}(a) shows the time to find a symmetry for an injected SWAP symmetry (blue), and an injected random Clifford symmetry consisting of $10$-gate RCC (yellow) \footnote{
Based on numerical investigation, the number of gates chosen for the RCC symmetry does not have a significant impact on the performance of our algorithm. 
}. We initially see a quadratic scaling (grey dotted line) of the time taken with system size, consistent with the graph building time, which is limited by the $M^2$ symplectic products (here we choose $M \approx 3n$). We observe a slight increase from the $n^2$ scaling for large $n$, which is due to finding $\underline{S}$ from a $\underline{\Pi}$: a subroutine scaling as $n^3$ (grey dashed-dotted line).

We now turn to more realistic models. In Fig.~\ref{fig:times_all_models}(b) we show the times to find symmetries on Transverse Field Ising (TFI) models of differing geometries: ladder (blue), square (yellow), triangular (green) and a ladder with dephasing GKSL terms (orange, see Sec.~\ref{sec:pauli_lindblad}). On the other hand, Fig.~\ref{fig:times_all_models}(c) includes Fermionic examples. A Fermi-Hubbard chain \cite{essler2005thehubbard, McClean_2020} is in yellow, and a disordered Fermionic t-V chain \cite{Vardhan_fermionic_2017, KieferEmmanouilidis_Unlimited_2021} in blue.
In both panels (b) and (c) we observe largely similar scaling to the previous models from (a), with the main notable difference being the open ladder, which is slower. This is explained by the requirement of doubling $n$ to represent the Liouvillian, see Sec.~\ref{sec:pauli_lindblad}.

In Fig.~\ref{fig:times_all_models}(d) we benchmark Heisenberg XXZ models \cite{Ramos_Confinement_2020, Voorden_scars_2020}, with all-to-all couplings (blue) and a ladder (yellow) geometry. Here, we see a significant difference in the scaling of the time taken to find a symmetry, owing to the system properties. The all-to-all model has $M \propto n^2$ Pauli terms, whereas the ladder geometry has $M \propto n$. As the graph size scales with $M$, this yields different behaviours at varying $n$. This is consistent with the $\propto n^4$ (dashed line), and $\propto n^2$ (dotted line) scalings observed. Indeed, in the inset we plot the time taken by our approach against $M$ rather than $n$, and see that both models are characterized by an approximately quadratic scaling.

The final model we analyse is the Toric code \cite{BullockBrennen2007QuditSurface, Resende_toric, Chan_VQE_2024, ferguson2021measurement} with different qudit dimensions $d=2$ and $d=3$, see Eq.~\eqref{eq:toricH} and Sec.~\ref{app:symmetry_extension}. In Fig.~\ref{fig:times_all_models}(e) we consider a highly symmetric parameter regime where there is no local field on the qudits [$c_g = 0$, $c_x \neq c_z$ in Eq.~\eqref{eq:toricH}] and therefore the system can be efficiently diagonalized \cite{ferguson2021measurement}. Here, we compare the MRV search of Appendix~\ref{app:search_scheme} to the \texttt{bliss} method, finding comparable results for qubits ($d=2$), however the MRV search is faster for qutrits ($d=3$). This we ascribe to a combination of factors. Firstly, \texttt{bliss} requires a mapping to a vertex coloured graph only, which is trivial for $d=2$, as we simply have connected or disconnected vertices encoding commutation relations. However for $d=3$ the symplectic product can be 0, 1, or 2, and thus edge labels are required. The resulting effective vertex coloured graph is larger. Second, we see that the MRV search is somewhat fortuitous for the $d=3$ case. Indeed, if we randomise the search, the MRV becomes much slower. This highlights that, ultimately, the time required to find a GA (and thus a Clifford symmetry) is problem dependent~\cite{cameron2003automorphisms}.

We then, in Fig.~\ref{fig:times_all_models}(f), look at the regime with far fewer symmetries [$c_g \neq 0$, $c_x \neq c_z$ in Eq.~\eqref{eq:toricH}]. Here, we do not observe any polynomially scaling search scheme, and see that the \texttt{bliss} scheme is superior. This is the only example where we see graph augmentation with auxiliary vertices for minimal circuits is beneficial, see Sec.~\ref{app:graph_augmentation}. This is because of the complex Pauli dependency relations of the Toric code, and the fact that without augmentation there are a large number of GA generators that do not lift to a symmetry due to the Pauli dependency checks failing. Whilst there are many minimal circuits 
($4008$ for $n=28$, and $10924$ for $n=32$) 
yielding a large graph ($84326$ vertices for $n=28$, and $262748$ for $n=32$), the augmentation prevents such failing checks, which otherwise take a prohibitive time. The limitation of this method is simply the size of the overall graph, which is further increased in the \texttt{bliss} approach by the need to map the input graph described above to a vertex coloured graph only.

It is challenging to pinpoint the exact scaling of our algorithm. There are several subroutines -- computing the graph ($\propto M^2$), determining $\underline{S}$ from $\underline{\Pi}$ ($\propto n^3$), finding a GA (problem dependent, often polynomial, worst case $\propto \exp(\sqrt{(M \log M)})$ \cite{Babai2016GIQuasipoly, McKay2014practical}) -- that are the limiting factor in different regimes. 
It ultimately depends on the input Hamiltonians and its parameters $n$ and $M$, but also coefficients $\vec{c}$, phases $\vec{\eta}$ and interactions $\underline{p}$. In most of the examples above, for $n \leq 1000$, we see an approximate quadratic scaling with $M$ dictated by the graph computation. Concerning the Toric code results in Fig.~\ref{fig:times_all_models}(f), permutations in the non-augmented case are obtained that do not pass the code automorphism checks, and thus we must combine them to ensure correctness of the phases and the Pauli dependencies (see Sec.~\ref{app:graph_augmentation} and \ref{app:phase_correction}). The number of these combinations increases combinatorially, making instances with many qubits $n$ inaccessible. The graph augmentation strategy [yellow squares Fig.~\ref{fig:times_all_models}(f)] with the auxiliary vertices described in Sec.~\ref{app:graph_augmentation} ensures all permutations automatically satisfy Pauli dependencies, hence the significant speedup. The cost, however, is that the memory needed to build the augmented graph is significantly larger, posing another limitation to the attainable qudit numbers. 

\subsection{Inspective extension of symmetries}\label{app:symmetry_extension}

As explained in Ref.~\cite{main}, for a given model it is often possible to use the outcomes of our algorithm to inductively build a symmetry that is valid for arbitrary system sizes. The TFI model on a ladder was considered in Ref.~\cite{main}.
Here, we find a general symmetry of the Toric code in Fig.~\ref{fig:times_all_models}(f), for which our algorithm cannot handle instances with more than $n = 28$ qubits.

The Toric code Hamiltonian is
\begin{equation}\label{eq:toricH}
    \hat H_{\mathrm{TC}}
    =
    c_z \sum_s \hat A_s
    +
    c_x \sum_p \hat B_p
    +
    c_g \sum_e \hat Z_e,
\end{equation}
where $\hat A_s=\prod_{e\in s}\hat Z_e$ (star `$s$' terms) and $\hat B_p=\prod_{e\in\partial p}\hat X_e$ (plaquette `$\partial p$' terms) \cite{BullockBrennen2007QuditSurface, Resende_toric, Chan_VQE_2024, ferguson2021measurement}.
The diagonalizability (star and plaquette terms are all-to-all commuting) is broken by the $Z$-field term proportional to $c_g$ in Eq.~\eqref{eq:toricH}.
In this regime our GA search is slowest [Fig.~\ref{fig:times_all_models}(f)], yet it remains possible to extract exact symmetries at large sizes $n$ by extrapolating the pattern identified for small-$n$ solutions.

We consider an open-boundary $N_x \times N_y = 2 \times L$ ladder (qubits on edges) with $n = N_x(N_y-1)+N_y(N_x-1) = 3L-2$ qubits. The obtained Clifford symmetries have symplectic
\begin{equation}\label{eq:ladder_lengthflip_symp}
\underline{S}_{\rm TC}
=
\begin{pmatrix}
\underline{\Pi}_{\rm TC} & \underline C_{\rm TC}\\[2pt]
\underline 0_n & \underline{\Pi}_{\rm TC}
\end{pmatrix},
\end{equation}
so that, for each $\vec{p}_i = [\vec{x_i} \, | \, \vec{z}_i]$, we have $\vec{x}_i \mapsto \vec{x} \,\underline{\Pi}_{\rm TC}$ and $\vec{z}_i \mapsto \vec{z}_i \,\underline{\Pi}_{\rm TC}+\vec{x}_i\,\underline{C}_{\rm TC}$.
$\underline{S}_{\rm TC}$ is a reflection of the lattice, but acts on vectors $(\vec x,\vec z)\in\mathbb{Z}_2^{2n}$ whose components are indexed by a fixed qubit ordering $e_0,\dots,e_{n-1}$. Thus the reflection is represented by a permutation matrix $\underline\Pi_{\rm TC}$ whose entries depend on this chosen ordering. In the following, we explain how to build the matrices $\underline{\Pi}_{\rm TC}$ and $\underline{C}_{\rm TC}$.

For the open $2\times L$ ladder it is natural to group edges by type,
\begin{equation*}
(h_0,\dots,h_{L-1}\mid v^{\rm L}_0,\dots,v^{\rm L}_{L-2}\mid v^{\rm R}_0,\dots,v^{\rm R}_{L-2}),
\end{equation*}
where $h_j$ are the horizontal rung edges and $v^{\rm L}_j$, $v^{\rm R}_j$ are the vertical rail edges between columns $j$ and $j+1$ on the left and right legs. In this basis the longitudinal reflection acts as
\begin{equation*}
h_j \mapsto h_{L-1-j},\qquad
v^{\rm L/R}_j \mapsto v^{\rm L/R}_{L-2-j},
\end{equation*}
such that $\underline\Pi_{\rm TC}e_i=e_{\pi_{\rm TC}(i)}$ with
\begin{equation}\label{eq:ladder_lengthflip_perm}
\pi_{\rm TC}(i)=
\begin{cases}
L-1-i, & 0 \le i < L\\
3L-2-i, & L \le i < 2L-1\\
5L-4-i, & 2L-1 \le i < 3L-2
\end{cases}.
\end{equation}

On the other hand, the off-diagonal block $\underline{C}_{\rm TC}$ encodes the non-geometric Clifford dressing that mixes $X$-exponents into $Z$-exponents. It is convenient to factor out the permutation by defining
\begin{equation}
\underline K_{\rm TC} = \underline C_{\rm TC}\,\underline\Pi_{\rm TC}^\top,
\qquad\text{so that}\qquad
\underline C_{\rm TC}=\underline K_{\rm TC}\,\underline\Pi_{\rm TC},
\end{equation}
with symplecticity implying $\underline K_{\rm TC}=\underline K_{\rm TC}^\top$ over $\mathbb{Z}_2$.
For the symmetries found here, $\underline K_{\rm TC}$ is sparse and local: it couples each rung to its adjacent rail edges and couples neighbouring rail edges along each leg. Writing $\underline K_{\rm TC}$ in the block ordering $(h\,|\,v^{\rm L}\,|\,v^{\rm R})$ gives
\begin{equation}\label{eq:K_closed_form}
\underline K_{\rm TC} 
=
\begin{pmatrix}
\underline 0_{L} & \underline T & \underline T\\
\underline T^\top & \underline I_{L-1} & \underline I_{L-1}+\underline A\\
\underline T^\top & \underline I_{L-1}+\underline A & \underline I_{L-1}
\end{pmatrix}
\bmod{2},
\end{equation}
where $\underline T\in\mathbb{Z}_2^{L\times(L-1)}$ is the rung-rail incidence matrix with
$T_{j,k}=1$ iff $k=j$ or $k=j-1$, and
$\underline{A}\in\mathbb{Z}_2^{(L-1)\times(L-1)}$ is the path adjacency on $L-1$ sites:
$A_{k,k+1}=A_{k+1,k}=1$. 
%for $k=0,\dots,L-3$ and zero otherwise.
Eqs.~\eqref{eq:ladder_lengthflip_symp}--\eqref{eq:K_closed_form} specify the symplectic for arbitrary $L$.

To fully specify the symmetry $\hat{S}_{\rm TC}$, we only miss the associated phase vector $\vec\phi=(\vec\phi_X,\vec\phi_Z)\in\mathbb Z_4^{2n}$, see Eq.~\eqref{eq:phase_update}. Following the outcomes of our algorithm, 
\begin{equation}\label{eq:ladder_lengthflip_phase}
\vec\phi_Z=\vec 0_n,
\qquad
(\vec\phi_X)_i=
\begin{cases}
1, & 0\le i < L \quad (h\text{-edges})\\
0, & L\le i < n \quad (v\text{-edges})
\end{cases},
\end{equation}
yielding the desired symmetry $\hat{S}_{\rm TC} \cong (\underline{S}_{\rm TC}, \vec{\phi})$ for the Toric code in Eq.~\eqref{eq:toricH} on an arbitrarily long ladder.

\section{Conclusion}\label{app:conclusion}

In this work, we have described and demonstrated a graph automorphism algorithm for finding Clifford symmetries in quantum many-body systems. 
We show that both open and closed quantum systems can be tackled by our approach, with the only requirement being that a representation of the model as a Pauli sum on a set of qudits (potentially of different dimensions) exists.
We have benchmarked our algorithm on a range of models, with different search schemes to investigate their performances. In all but one model we have found symmetries on hundreds of qubits. For the Toric code, the search scheme is less well suited, yet we still found a symmetry on $n = 28$ qubits in less than an hour on a desktop machine. For larger instances $n > 28$, we have analytically obtained symmetries from the results of our algorithm for increasing models' sizes.

Our current search scheme is likely suboptimal for many models, and additional speedups may be helpful in different circumstances. For example, we have seen for the Toric model that in some cases it is useful to use the full augmented graph including all dependency information; an improved approach to building such graphs, without the need to build \emph{all} circuits and then refine, would drastically speed up this process. Additionally, for such systems with complex dependency structure, heuristic methods to build minimal circuits may suffice, offering a potentially incomplete search for symmetries, with a benefit of a much faster search in such systems.

Due to the requirement for phase correction (and optional but often beneficial checks of the Pauli dependence relation to remove need for auxiliary vertices), it remains an open question as to whether finding Clifford symmetries is reducible exactly to GA. We note that phase correction was possible for the majority of our candidate symmetries, and thus, empirically, our examples are no more complex than the underlying GA. However, as phases are not Clifford invariant, any (obvious) search scheme that includes phase information necessarily requires updates to the graph colouring at each step (or regular intervals), distancing the problem from typical GA settings by a single check for possible phase corrections.

\section*{Acknowledgments}
We thank Andrew Jena and Lane Gunderman for discussions that were beyond insightful.
We acknowledge and are grateful to the EPSRC quantum career development grant EP/W028301/1 and the EPSRC Standard Research grant EP/Z534250/1.

\section*{References}
\bibliographystyle{apsrev4-1}
\bibliography{bibli}

\appendix

\appendix

\section{Mixed dimension qudits}\label{app:mixed_d}
The symplectic representation of the Clifford formalism may be extended to mixed dimension qudits by a small modification to the formalism of the main text.
Mixed dimensions require a mixed-modulus tableau, where a tableau row is written as
\begin{equation*}
  p=[x\,|\,z],\qquad x=(x_1,\dots,x_n),\ z=(z_1,\dots,z_n),\quad
  x_i,z_i\in\mathbb Z_{d_i}.
\end{equation*}
Let us define
\begin{equation*}
  L = \mathrm{lcm}(d_1,\dots,d_n),
\end{equation*} 
together with the complex phase roots
\begin{equation*}
  \omega_{d_i} = e^{2\pi i/d_i},\qquad
  \zeta_{d_i} = e^{\pi i/d_i}.
\end{equation*}
With these changes, one can update the tableau representation in Eq.~\eqref{eq:pauli_sum} to account for mixed-qudit systems. 
Concerning the arithmetic operation, one has to update the definition of the symplectic form in Eq.~\eqref{eq:symplectic_form} to
\begin{equation*}
  \underline{W}_{\mathbf d} = \mathrm{diag}\left(\frac{L}{d_1},\dots,\frac{L}{d_n}\right),
  \qquad
  \underline{\Omega}_{\mathbf d} =
  \begin{pmatrix}
    0 & \underline{W}_{\mathbf d}\\[2pt]
    -\underline{W}_{\mathbf d} & 0
  \end{pmatrix}.
\end{equation*}
Note that tableau arithmetic is site-wise in $\mathbb Z_{d_i}$, while phase exponents live in $\mathbb Z_{2L}$.

For finding a Clifford symmetry in practice, it is not necessary to perform significant alterations to the single dimension case, we restrict to the common Clifford operations which do not entangle qudits of different dimensions. Thus, any Clifford symmetry must be localised to a single set of like dimension qudits. For mixed dimensions, then, one can simply search for symmetries independently on each like dimensioned subspace.

\section{Pauli dependency structure via permutation code automorphism}\label{app:linear_codes}

As we have seen in Sec.~\ref{app:graph_augmentation}, the linear dependence structure of the Pauli labels can be encoded directly in the graph by adjoining vertices for the minimal circuits. This provides a conceptually transparent representation, but for some instances it may not be an efficient approach: when the nullity (dimension of the kernel) is sufficiently large, obtaining the full set of circuits by explicit nullspace enumeration can be computationally expensive, and the resulting augmented graph may become large. We therefore seek an alternative formulation that enforces the same dependence constraint without requiring an explicit computation of all circuits.

The example of Eq.~\eqref{eq:toy_dep} already shows the main subtlety. The permutation $1\leftrightarrow 4$ preserves the circuit $\{1,2,4\}$, even though it exchanges a basis element with a dependent one. Thus the invariant object is not the coordinate tuple of a Pauli relative to a chosen basis, but the basis-independent dependence structure itself.

A possible solution is to describe the Pauli labels through a generator matrix for the associated linear code. One may choose a particular basis set $\underline{p}_{b}$ (with $n_b \leq 2n$ elements) and write each dependent Pauli as a combination of elements in that basis. However, this coordinate description is not invariant in itself, since a symmetry is allowed to map one independent generating set to another, thereby changing the representation of a given Pauli in the chosen basis. What is invariant is the row space of the corresponding generator matrix, and this is what leads to the permutation code automorphism condition.

To see this explicitly, we use the same example as Sec.~\ref{app:graph_augmentation}.
One convenient choice of independent set is $\{\vec{p}_1,\vec{p}_2,\vec{p}_3\}$ (so $n_b=3$), in which $\vec{p}_4$ has coordinate tuple $(1,1,0)$ relative to $(\vec{p}_1,\vec{p}_2,\vec{p}_3)$. But the same instance admits a different independent set, for example $\{\vec{p}_4,\vec{p}_2,\vec{p}_3\}$, since $\vec{p}_1=\vec{p}_4+\vec{p}_2$ also follows from Eq.~\eqref{eq:toy_dep}. 
We consider -- ignoring colours temporarily for clarity -- a symmetry encoding the permutation $1\leftrightarrow 4$ (leaving $2$ and $3$ fixed):
\begin{equation*}
\begin{tikzcd}[row sep=0.25em, column sep=2.6em, cells={inner sep=1pt}]
p_1 \; \arrow[r, mapsto] \; &  \; p_4 \\
p_2  \; \arrow[r, mapsto]  \; &  \; p_2 \\
p_3  \; \arrow[r, mapsto] \;  & \;  p_3 \\
p_4 \;  \arrow[r, mapsto]  \; & \;  p_1
\end{tikzcd}
\end{equation*}
Under this relabelling, the relation in Eq.~\eqref{eq:toy_dep} is preserved: the circuit $\{1,2,4\}$ maps to itself, and equivalently the equation $\vec{p}_1+\vec{p}_2+\vec{p}_4=0$ is sent to $\vec{p}_4+\vec{p}_2+\vec{p}_1=0$, i.e., the same constraint. The only difference is the dependencies in a chosen basis. In the new one $\{\vec{p}_4,\vec{p}_2,\vec{p}_3\}$, the previously basis element $\vec{p}_1$ becomes dependent with coordinate tuple $(1,1,0)$.

This illustrates that the dependence structure that must be preserved by a symmetry is basis-independent (it is determined by which subsets are independent), whereas the coordinate tuple of a Pauli relative to an arbitrarily chosen basis is not. In our case, we enforce dependence invariance via a code automorphism test. 
%In practice, we allow for a change of basis performed by an element of the general linear group $\mathrm{GL}(n_b,d)$, i.e., an $n_b \times n_b$ invertible matrix with elements selected from $\mathbb{Z}_d$.
%
%A convenient basis-independent way to \Luca{perform the code automorphism test for linear dependencies} is through linear codes \cite{code_equiv, Sayginel_codeauto_2025}. A linear code is a subspace of $\mathbb{Z}_d^{n_b\times M}$ that is closed under addition and scalar multiplication. 
In practice, we choose $n_b=\mathrm{rank}\{\underline{p}\}$ independent Paulis and build the generator matrix
\begin{equation}\label{eq:G_def_intro}
    \underline{G}\ \in\ \mathbb{Z}_d^{n_b\times M},
\end{equation}
whose $M$ columns are labelled by the Pauli indices $i\in[1,\, \dots,\, M]$. 
This generator matrix can be written as $\underline{G} = [\underline{\mathbb{1}}_{n_b} | \underline{D}]$, with $\underline{D} \in \mathbb{Z}_d^{n_b \times (M - n_b)}$. Here the identity part denotes that the first $n_b$ columns represent Paulis within the basis $\underline{p}_b$, further columns have entries which indicate the dependencies of each subsequent Pauli. In the previous example, 
%
%In the basis of our previous example, the dependence data may be encoded by a generator matrix whose first three columns correspond to the chosen independent set and whose fourth column encodes the relation for $\vec{p}_4$:
\begin{equation}\label{eq:G_example_intro}
    \underline{G}
    \;=\;
    \left[
    \begin{array}{cccc}
        1 & 0 & 0 & 1 \\
        0 & 1 & 0 & 1 \\
        0 & 0 & 1 & 0
    \end{array}
    \right],
\end{equation}
so that column $4$ is the sum of columns $1$ and $2$, implying Eq.~\eqref{eq:toy_dep}. The specific form of Eqs.~\eqref{eq:G_def_intro} and \eqref{eq:G_example_intro} depends on the chosen independent set. However, the row space of $\underline{G}$ (and hence which subsets of columns are independent/dependent) is invariant: any basis change $\underline{U}$
\begin{equation}\label{eq:U_row_intro}
    \underline{G}\ \mapsto\ \underline{U}\,\underline{G},
    \qquad \underline{U}\in \mathrm{GL}(n_b,d),
\end{equation}
$\mathrm{GL}(n_b,d)$ being the general linear group, does not alter the underlying dependence relations. This invariance is what we use to perform the code automorphism test for linear dependencies.

As we have seen in Eq.~\eqref{eq:symmetry_definition_tableau}, a Clifford symmetry corresponds to a permutation $\underline{\Pi}$ of the Hamiltonian, yet not all permutations correspond to symmetries. Concerning the linear dependencies and the matrix $\underline{G}$, $\underline{\Pi}$ also permutes its columns. The dependence structure is preserved when the permuted columns span the same row space, i.e. when there exists a change of basis $\underline{U}$ such that
\begin{equation}\label{eq:code_auto_intro}
    \underline{U}\,\underline{G}\,\underline{\Pi} \;=\; \underline{G}.
\end{equation}
This is the permutation code automorphism condition \cite{Huffman_Pless_2010} implemented by our algorithm. 
%$\underline{\Pi}$ is an automorphism of the linear code defined by $\underline{G}$. 
It is basis-independent because any change of generating set resulting from the permutation $\underline{\Pi}$ can be reverted by the change of basis $\underline{U}$ in Eq.~\eqref{eq:code_auto_intro}.

Going back to our example, we have seen that there are two valid [before the test in Eq.~\eqref{eq:code_auto_intro}] relabellings. The first $\vec{p}_1 \mapsto \vec{p}_2$ corresponds to permutation $\underline{\Pi}_{(12)}$. This preserves the structure Eq.~\eqref{eq:G_example_intro} up to a row permutation  (indeed $\underline{U}$ can be chosen to swap the first two rows), so Eq.~\eqref{eq:code_auto_intro} holds and the dependence structure is compatible with $\underline{\Pi}_{(12)}$.
The second relabelling maps $\vec{p}_1 \mapsto \vec{p}_4$ and corresponds to a permutation $\underline{\Pi}_{(14)}$. This changes the generator matrix $\underline{G}$ more radically, as it moves a dependent column ($4$) into the independent position ($1$). However, with
\begin{equation*}
    \underline{U}
    \;=\;
    \left[
    \begin{array}{ccc}
        1 & 0 & 0 \\
        1 & 1 & 0 \\
        0 & 0 & 1 
    \end{array}
    \right] \in \mathrm{GL}(n_b=3,d=2),
\end{equation*}
it is still possible to satisfy Eq.~\eqref{eq:code_auto_intro}: $\underline{U}\,\underline{G}\,\underline{\Pi}_{(14)}=\underline{G}$. 

By contrast, a relabelling that changes the dependence relations cannot be repaired by any row basis change. For example, if a permutation sends the dependent column (here column $4$) to a column that is not in the span of the images of columns $1$ and $2$, then there exists no $\underline{U}$ satisfying Eq.~\eqref{eq:code_auto_intro}. In this case the permutation is not an automorphism, and cannot correspond to a Hamiltonian symmetry. This example illustrates why Eq.~\eqref{eq:code_auto_intro} is essential: it captures the freedom to choose a different generating set after relabelling.

\section{Phase correction}\label{app:lemma}
\begin{lemma}
\label{lem:prime_d_phase}
Let $d$ be prime, and fix a candidate symplectic map $\underline{S}$. If $\vec{\phi}_0$ is any phase vector compatible with $\underline{S}$, then every phase vector compatible with the same $\underline{S}$ can be written as
\begin{equation}
    \vec{\phi}=\vec{\phi}_0+2\vec{u}
    \bmod{2d},
\end{equation}
for a unique $\vec{u}\in\mathbb{Z}_d^{2n}$.
\end{lemma}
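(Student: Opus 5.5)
The plan is to take the paper's definition of ``compatible'' literally: a phase vector $\vec{\phi}\in\mathbb{Z}_{2d}^{2n}$ is compatible with $\underline{S}$ exactly when it satisfies the parity constraint of Eq.~\eqref{eq:phase_vector_validity}. That is,
\begin{equation*}
(d-1)\,\mathrm{diag}(\underline{S}\,\underline{\Omega}_{\rm l}\,\underline{S}^\top)-\vec{\phi}=0 \bmod 2 .
\end{equation*}
This constraint fixes only the residue of each component of $\vec{\phi}$ modulo $2$, and that residue is determined by $\underline{S}$ alone. So the set of compatible vectors is a coset of the subgroup $\{\vec{v}\in\mathbb{Z}_{2d}^{2n} : \vec{v}\equiv 0 \bmod 2\}$. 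The lemma then reduces to describing this subgroup and showing that it is parametrised bijectively by $2\vec{u}$ with $\vec{u}\in\mathbb{Z}_d^{2n}$.

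\emph{Existence.} Let $\vec{\phi}$ and $\vec{\phi}_0$ both be compatible with $\underline{S}$. Subtracting their two parity constraints gives $\vec{\phi}-\vec{\phi}_0\equiv 0 \bmod 2$ componentwise. One point needs care here: reduction mod $2$ must be well defined on $\mathbb{Z}_{2d}$, and it is, because $2\mid 2d$. Hence each component satisfies $\phi_k-\phi_{0,k}=2u_k \bmod 2d$ for some integer $u_k$. Changing $u_k$ by a multiple of $d$ changes $2u_k$ by a multiple of $2d$, so $u_k$ may be taken in $\mathbb{Z}_d$. Conversely, for any $\vec{u}$ the vector $\vec{\phi}_0+2\vec{u}$ obeys the same parity constraint, so it is compatible. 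This converse direction shows that the parametrisation is onto the full set of compatible vectors. It also matches the Pauli-gate picture of Sec.~\ref{app:clifford_summary}, where a Pauli gate contributes $(\underline{\mathbb 1}_{2n},2\vec{p}\,\underline{\Omega})$, so composing with Pauli gates shifts the phase vector by even vectors.

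\emph{Uniqueness.} Suppose $2\vec{u}\equiv 2\vec{u}' \bmod 2d$. Then $2(u_k-u'_k)=2d\,m_k$ for some integers $m_k$, and dividing by $2$ in $\mathbb{Z}$ gives $u_k\equiv u'_k \bmod d$. In other words, the map $\mathbb{Z}_d\to\mathbb{Z}_{2d}$, $u\mapsto 2u$, is injective; it is an isomorphism onto $2\mathbb{Z}_{2d}$. Primality of $d$ is not actually needed for this step. It matters only because it makes $\mathbb{Z}_d$ a field, which is what the subsequent linear solve $2\underline{p}\,\vec{u}=\vec{\eta}-\vec{\eta}_0$ uses.

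\emph{Main obstacle.} The difficulty is definitional, not computational. The argument must ensure that ``compatible with $\underline{S}$'' means exactly the parity condition of Eq.~\eqref{eq:phase_vector_validity}, following Ref.~\cite{Hostens_stabilizer_2005}. In particular, compatibility should impose no further constraint beyond that condition. Whatever the condition is, it should also depend on $\vec{\phi}$ only through its reduction mod $2$. I would justify this by citing the characterisation in Ref.~\cite{Hostens_stabilizer_2005}: every pair $(\underline{S},\vec{\phi})$ obeying this condition defines a genuine Clifford unitary. The remaining steps then follow directly.
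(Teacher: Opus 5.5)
Your proposal is correct and follows the paper's proof essentially step for step: you identify compatibility with the parity condition of Eq.~\eqref{eq:phase_vector_validity}, subtract the two conditions to get $\vec{\phi}-\vec{\phi}_0\equiv 0 \bmod 2$, write this difference as $2\vec{u}$ with $\vec{u}\in\mathbb{Z}_d^{2n}$, and check the converse. Your two additions are sound refinements of points the paper states more tersely: the explicit proof that $u\mapsto 2u$ is injective from $\mathbb{Z}_d$ to $\mathbb{Z}_{2d}$, and the observation that primality of $d$ is not actually used in this lemma.
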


\begin{proof}
For fixed $\underline{S}$, compatibility of the phase vector is imposed by Eq.~\eqref{eq:phase_vector_validity} \cite{Hostens_stabilizer_2005}. 
Let $\vec{\phi}$ and $\vec{\phi}_0$ be two compatible phase vectors for the same $\underline{S}$. Subtracting Eq.~\eqref{eq:phase_vector_validity} for these two vectors gives
\begin{equation}
    \vec{\phi}-\vec{\phi}_0 \equiv 0 \pmod 2.
\end{equation}
Since tableau phases are defined modulo $2d$, each component of $\vec{\phi}-\vec{\phi}_0$ is therefore even residue $2d$.
Each component is thus uniquely of the form $2u$ with $u\in\mathbb{Z}_d$ (equivalently $u\in\{0,\dots,d-1\}$). Hence there exists a unique $\vec{u}\in\mathbb{Z}_d^{2n}$ such that
\begin{equation*}
    \vec{\phi}-\vec{\phi}_0 \equiv 2\vec{u}\pmod{2d},
\end{equation*}
and therefore $\vec{\phi}=\vec{\phi}_0+2\vec{u}\bmod{2d}$.

Conversely, for any $\vec{u}\in\mathbb{Z}_d^{2n}$, the vector $\vec{\phi}_0+2\vec{u}$ has the same parity modulo $2$ as $\vec{\phi}_0$, and therefore satisfies the same compatibility condition for $\underline{S}$.
\end{proof}

Lemma~\ref{lem:prime_d_phase} shows that, for fixed $\underline{S}$, the remaining freedom in the phase vector is exactly that induced by Pauli conjugation.

\section{Adding more colours: Weisfeiler-Leman colour refinement}\label{app:WL}

Colours of vertices narrow down the available options for permutations. The core idea of one-dimensional Weisfeiler-Leman (WL) colour refinement is to iteratively sharpen a vertex colouring by replacing each vertex colour with a canonical encoding of its current colour together with the multiset of colours observed in its neighbourhood \cite{WL_refinement}. In our setting we apply WL to the edge-coloured complete graph associated with the symplectic-product matrix $\underline{M}(\underline{H})$, using it as a fast preprocessing step to obtain a refined vertex partition that accelerates the subsequent search over candidate permutations.

WL refinement is initialised by assigning every vertex $i$ a seed colour $\kappa_i^{(0)}$ that contains at minimum the coefficient label $c_i$. Optionally, additional per-vertex labels may be included (see Appendix~\ref{app:heuristic}). Importantly, this initial colouring is assigned to all vertices simultaneously; a WL `round' updates all colours in parallel.

Given a colouring $\{\kappa_i^{(t)}\}_{i=1}^M$ at round $t$, each vertex $i$ is reassigned a new colour by forming a key consisting of its current colour together with the multiset of edge-coloured neighbour data,
\begin{equation}
\label{eq:wl_update}
    \kappa_i^{(t+1)} \;=\; \Phi\!\left(
        \kappa_i^{(t)},
        \left\{\left(\kappa_j^{(t)},\,\mathcal{X}(i\mapsto j)\right)\right\}_{j=1}^{M}
    \right),
\end{equation}
where $\mathcal{X}(i \mapsto j)$ denotes the edge label (here $\mathcal{X}(i \mapsto j)=\underline{M}(\underline{H})_{ij}$), and $\Phi$ denotes a canonical relabelling map that assigns identical keys the same new colour and different keys different colours. Iteration continues until the colouring stabilises, yielding a stable partition $\mathcal{C}^{(0)}=\{C_\alpha\}$; we denote the final colour of vertex $i$ by $\tilde{c}_i$.

WL refinement is not a complete isomorphism test: non-isomorphic graphs can share the same stable colouring \cite{WL_refinement}. This limitation is not problematic for our purposes, because we do not use WL to decide whether a symmetry exists. Instead, WL is used as a fast way to \emph{increase the effective number of vertex colours} by incorporating local edge-colour information into the vertex labels. Intuitively, two vertices that may have started with the same coefficient colour will often acquire different WL colours once their commutation neighbourhoods are taken into account.

This refined colouring simplifies the symmetry search as a valid relabelling cannot map a vertex to one of a different colour. Equivalently, the stable WL partition splits the vertex set into smaller colour classes, and the backtracking search explores candidate images only within the corresponding class. Any permutation that survives this pruning is subsequently checked against the full exact constraints and discarded if it fails. In this way WL refinement reduces the branching factor of the search without affecting correctness of the final verified symmetries.

\section{Search scheme}\label{app:search_scheme}

\begin{figure}[t]
\centering
\begin{tikzpicture}[
  >=Latex,
  font=\scriptsize,
  node distance=4.2mm and 6mm,
  proc/.style={
    rectangle,
    draw=black!75,
    semithick,
    fill=white,
    align=center,
    inner sep=2.6pt,
    text width=24mm,
    minimum height=6.6mm
  },
  decision/.style={
    diamond,
    draw=black!75,
    semithick,
    fill=white,
    align=center,
    aspect=2.0,
    inner sep=0.8pt,
    minimum width=17mm,
    minimum height=8.5mm
  },
  verify/.style={
    rectangle,
    draw=black!75,
    semithick,
    fill=black!6,
    align=center,
    inner sep=2.6pt,
    text width=24mm,
    minimum height=6.6mm
  },
  terminal/.style={
    rectangle,
    rounded corners=2pt,
    draw=black!75,
    semithick,
    fill=white,
    align=center,
    inner sep=2.6pt,
    text width=21mm,
    minimum height=6.6mm
  },
  arrow/.style={->, semithick},
  lab/.style={font=\footnotesize, fill=white, inner sep=0.8pt}
]

% ----- main column -----
\node[proc]     (init)      {Initialise $\vec{\pi} = (*, *, \dots)$};
\node[proc,     below=of init]      (select)    {Choose next $i$ (MRV)};
\node[proc,     below=of select]    (build)     {Build $C_i$ in Eq.~\eqref{eq:colour_classes}};
\node[decision, below=5.0mm of build]  (candrem)   {Candidates\\left?};
\node[proc,     below=5.0mm of candrem] (take)      {Try next $t$};
\node[decision, below=5.0mm of take]   (cons)      {Consistent?};
\node[proc,     below=5.0mm of cons]   (update)    {Set $\vec{\pi}_i = t$};
\node[decision, below=5.0mm of update] (fcomp)     {$\vec{\pi}$ complete?};
\node[verify,   below=5.0mm of fcomp]  (ver)       {Phase/dependency OK?};
\node[terminal, below=of ver]          (done)      {Return $\underline{\Pi}$};

% ----- backtracking column -----
\node[decision, right=10mm of candrem] (btavail)   {Backtrack?};
\node[proc,     above=7mm of btavail]  (undo)      {Undo last\\assignment};
\node[terminal, below=7mm of btavail]  (nosym)     {No symmetry};

% ----- guide lines -----
% two separate left guides so the two return loops do not overlap
\coordinate (leftInner) at ($(select.west)+(-6.5mm,0)$);
\coordinate (leftOuter) at ($(select.west)+(-12mm,0)$);

% right guide outside the widest right-column box
\coordinate (rightGuide) at ($(undo.east)+(6mm,0)$);

% ----- main flow -----
\draw[arrow] (init) -- (select);
\draw[arrow] (select) -- (build);
\draw[arrow] (build) -- (candrem);

\draw[arrow] (candrem) -- node[lab, pos=0.34] {yes} (take);
\draw[arrow] (take) -- (cons);

\draw[arrow] (cons) -- node[lab, pos=0.34] {yes} (update);
\draw[arrow] (update) -- (fcomp);

\draw[arrow] (fcomp) -- node[lab, pos=0.34] {yes} (ver);
\draw[arrow] (ver) -- node[lab, pos=0.34] {yes} (done);

% ----- left-side return loops -----
\draw[arrow]
  (cons.west) -- node[lab, pos=0.22] {no} (cons.west -| leftInner)
  |- (candrem.west);

\draw[arrow]
  (fcomp.west) -- node[lab, pos=0.22] {no} (fcomp.west -| leftOuter)
  |- (select.west);

% ----- transitions to backtracking -----
\draw[arrow] (candrem.east) -- node[lab, pos=0.45] {no} (btavail.west);

\draw[arrow]
  (ver.east) -- node[lab, pos=0.22] {no} (ver.east -| rightGuide)
  |- (btavail.east);

% ----- backtracking flow -----
\draw[arrow] (btavail) -- node[lab, pos=0.34] {yes} (undo);

\draw[arrow]
  (undo.east) -- (undo.east -| rightGuide)
  |- (select.east);

\draw[arrow] (btavail) -- node[lab, pos=0.34] {no} (nosym);

\end{tikzpicture}
\caption{Search algorithm for constructing a permutation $\underline{\Pi}$.}
\label{fig:algorithm}
\end{figure}

The essential element of the search algorithm is to construct a permutation $\underline{\Pi}$ from the corresponding vector $\vec{\pi}$. 
As shown in Fig.~\ref{fig:algorithm}, we initialize $\vec{\pi} = (*, *, \dots)$, where $*$ denotes an unassigned entry $\pi_i$. We then iteratively assign candidate values to $\pi_i$ under the necessary restrictions imposed by graph colouring, until a complete possible permutation is obtained. 
%We emphasise that the scheme presented is one possible choice for the search for $\vec{\pi}$, and one can imagine different schemes to obtain the permutation, which may be optimised for some set of Hamiltonians.
%
%Each element $\pi_i$ denotes the Pauli index to which the $i$th Pauli is mapped - so $\pi_1 = 2$ means $\vec{p}_1 \to \vec{p}_2$ under the permutation $\underline{\Pi}$. 
At each step of the algorithm we choose an unmapped label $i$ by a minimum-remaining-values (MRV) \cite{HARALICK1980263} heuristic (see below) measured against the base colour sets. This essentially means we choose the $i$ with the fewest available options with the same colour that is not already in the set of targets $\vec{\pi}$. We thus keep a list $C_i$ of feasible targets for each vertex $i$, written as 
\begin{equation}\label{eq:colour_classes}
    C_i = \{t \, | \, c_t = c_i\; {\rm and} \; t \notin \vec{\pi}\},
\end{equation}
where $c_i$ includes the coefficient of Pauli $\vec{p}_i$ as well as other labels included in the graph preprocessing (see Secs.~\ref{app:building_graph}, \ref{app:WL} and \ref{app:heuristic}). 

A proposal $i\mapsto t$ is accepted into $\vec{\pi}$ if it satisfies local symplectic consistency with all already mapped vertices:
\begin{equation}
\label{eq:local-consistency}
\langle \vec{p}_i, \,  \vec{p}_j \rangle = \langle \vec{p}_t,\, \vec{p}_{\pi_j}\rangle
\;\text{and}\
\langle \vec{p}_j, \, \vec{p}_i\rangle = \langle \vec{p}_{\pi_j},\, \vec{p}_t\rangle,
\end{equation}
for each $j$ with already assigned $\pi_j$ \footnote{
For $d=2$ one has $\langle \vec{p}_i, \, \vec{p}_j\rangle=\langle \vec{p}_j, \, \vec{p}_i\rangle$, so the second equality is redundant.
}. If this test passes, we extend $\vec{\pi}$ and continue; otherwise we backtrack. When $\vec{\pi}$ is complete (no `$*$' remains) we perform the final verification tests (see below and Secs.~\ref{app:linear_codes} and \ref{app:phase_correction}).

On the other hand, if the test of the symplectic conditions in Eq.~\eqref{eq:local-consistency} fails for the current choice $i \mapsto t$, that choice is rejected (see Fig.~\ref{fig:algorithm}) and the search tries the next admissible $t$ for the same $i$. If no admissible $t$ remains for $i$, the algorithm backtracks one level, removing the most recent assignment, restoring $\vec{\pi}$ and the available target set, and attempts a different target for the previous vertex. This process repeats (possibly backtracking multiple levels) until a new admissible branch is found or the search space is exhausted.

The runtime of the backtracking search can vary substantially depending on (i) the order in which unmapped vertices $i$ are selected, and (ii) the order in which admissible targets $t \in C_i$ are tried. We therefore allow for a standard deterministic ordering as well as a randomized ordering used for parallelised search. 
In the standard scheme, the next vertex $i$ (see Fig.~\ref{fig:algorithm}) is chosen based on the current base partition. Among unmapped $i$, the MRV subroutine prefers one whose colour set $C_i$ has the fewest remaining unused images. Furthermore, the candidates $t\in C_i$ are enumerated in a deterministic order.

This candidate ordering is not simply the raw vertex index order. 
Instead, candidates are sorted using a precomputed ordering key derived from the colours together with symplectic-product information. 
This ordering affects the order in which branches are explored, as those that are less likely to lead to further branching are preferred. It does not change the hard feasibility constraints nor the correctness of the search.

For some structured Hamiltonians, the MRV score can exhibit frequent ties. In such cases, the search can be highly sensitive to the ordering heuristic, with different branch orders leading to a large variation in search depth before a symmetry is found and hence to large runtime variance. To avoid this, we implement randomized search ordering that is varied between runs.
Since our goal in the present work is typically to find a single symmetry (rather than enumerate the full automorphism group), such randomized starts parallelize naturally.

Finally, whenever a mapping $\vec{\pi}$ is fully constructed (no `$*$' left) we build its permutation matrix $\underline{\Pi}$, and obtain a candidate $\underline{S}$ from Eq.~\eqref{eq:basis_map}. We then test the permutation code automorphism condition, Eq.~\eqref{eq:code_auto_intro}, and ensure phase correction (Sec.~\ref{app:phase_correction}). If either of these steps fails, we backtrack the search.

The algorithm, summarized in Fig.~\ref{fig:algorithm}, can be illustrated via the four-vertex example introduced above, with vertices
$V=\{1,2,3,4\}$ and coefficient colours such that only $1$ and $2$ may be exchanged, i.e. $c_1 = c_2$ and $c_3 \neq c_4 \neq c_1$. As such, in the first two iterations of our algorithm, the MRV chooses indices $i=3$ and $i=4$ to yield
\begin{equation*}
    \vec{\pi} = \{*, *, 3,\ 4 \}.
\end{equation*}
In fact, these elements are both characterized by a list of feasible targets [Eq.~\eqref{eq:colour_classes}] that contains one element (themselves: having distinct colours, they cannot be mapped to other elements). Thus the MRV heuristic trivially sets these permutations first.
The remaining vertices are characterized by the same colour, and thus (aside from self-mapping) we can have either of the following
\begin{equation*}
1 \mapsto 2, \qquad 2 \mapsto 1,
\end{equation*}
as next index $i$ to be chosen by the algorithm. In these cases, tests from Eq.~\eqref{eq:colour_classes} pass, finally yielding (after two iterations) $\vec{\pi} = \{2,1,3,4\}$. This candidate $\vec{\pi}$ is further tested for automorphisms and phases as necessary.

To implement the MRV we do not recompute at every step.
The implementation maintains, for each colour label $\tilde c$, the number of currently unused target vertices of that colour,
\[
    R(\tilde c)=\#\{j:\; j \text{ is unused and } \tilde c_j=\tilde c\}.
\end{equation*}
The next vertex is selected by minimizing $R(\tilde c_i)$ over unmapped $i$. Thus the MRV score is computed from the remaining size of the relevant colour bucket and updated incrementally as assignments are made and undone.

This is an MRV heuristic rather than the exact size of the locally filtered candidate set, since local symplectic consistency [Eq.~\eqref{eq:local-consistency}] is checked only when candidate images are tested.

Among vertices with equal MRV score, ties are broken by a fixed domain order built from the base colour classes. Concretely, the base colour classes are first ordered by decreasing class size, and the vertices within each class are then traversed in a stored order. This order is randomised for parallel searches. Thus different seeds explore the same constrained search tree with different branch-order priorities.

\subsection{Additional graph colouring from heuristic}\label{app:heuristic}

%\Luca{can't we use matroid to provide extra colouring that are truly invariant under the action of clifford?}
%\Charlie{Yes, I have a code that does it, but it doesn't speed up anything in the testing I've done with it}

To lower the number of candidate permutations $\vec{\pi}$, it is possible to add additional colours to the graph prior to WL refinement described in Sec.~\ref{app:WL}. The idea is to employ a heuristic that changes the graph colours based on the Pauli dependencies encoded in the generator matrix $\underline{G}$, see Sec.~\ref{app:linear_codes}. Since the $i$-th column of $\underline{G}$ records how Pauli $i$ is represented in the chosen generating set, column statistics can be used to enrich vertex colour.

Concretely, for each vertex $i$, we define the associated histogram $\mathrm{hist}(i)$ where the $a$-th column has height
\begin{equation}
    h_i(a) = \#\{\mu\in\{1,\dots,n_b\}| \underline{G}_{\mu i}=a\},
    \qquad a\in\mathbb{Z}_d.
\end{equation}
Here, `$\#$' counts elements in the set. 
%For qubits ($d=2$), this reduces (up to the trivial relation $h_i(0)=n_b-h_i(1)$) to the Hamming weight (number of non-zero elements) of the column. In effect, 
%$\mathrm{hist}(i)$ is a crude connectivity measure of Pauli $i$ to the chosen dependence relations.
Once $\mathrm{hist}(i)$ is computed for all Pauli strings, we perform a colour refinement such that, rather than the coefficient $c_i$ alone, each vertex is characterized by the pair $\{ c_i,\mathrm{hist}(i) \}$. 

%We then augment the initial WL seed by including this column profile,
%\begin{equation}
%    \kappa_i^{(0)} = \big(c_i,\mathrm{hist}(i),\ldots\big),
%\end{equation}
%and run the same refinement procedure as in Sec.~\ref{app:WL}.

The idea of this additional graph colouring is to add a feature that is invariant under the action of Clifford gates. However, $\mathrm{hist}(i)$ is not truly invariant, as it depends on the chosen $\underline{G}=[\underline{\mathbb{1}}_{n_b} \mid \underline{D}]$. Performing the change of basis in Eq.~\eqref{eq:U_row_intro} could change $\mathrm{hist}(i)$, leading to failure in finding available symmetries. Nevertheless, as discussed in Sec.~\ref{app:numerical_results}, this heuristic greatly speeds up symmetry finding as it can substantially reduce branching. One could similarly introduce such a heuristic via graph augmentation, by introducing only a subset of the minimal circuits, corresponding to a description of the dependencies in a particular basis.

\section{Example Hamiltonians}\label{app:hamiltonians}

\subsection{Random Pauli Hamiltonians}
In order for a thorough benchmark of the algorithm, we generate random Hamiltonians which we may guarantee have a Clifford symmetry of a given form.
A convenient way to generate random Hamiltonians with an injected Clifford symmetry is to construct them as sums of orbit-averaged Pauli seeds. Let a Clifford symmetry act on Pauli rows by $\vec p\mapsto \vec p\,\underline F$, with phase updates tracked by Eq.~\eqref{eq:phase_update}. For a randomly generated seed Pauli $\hat P(\vec p_k)$, define its orbit under repeated action of the symmetry and form the symmetrised block
\begin{equation}
    \hat H_k^{(\mathrm{sym})}
    =
    \sum_{t=0}^{L_k-1}
    c_k
    e^{
    \frac{
    i \pi
    }{
    d
    }
    \eta_{k,t}
    }
    \hat{P}
    \left(
    \vec p_k \underline F^{\,t}
    \right)
    ,
\end{equation}
where $L_k$ is the orbit length and $\eta_{k,t}\in\mathbb Z_{2d}$ are obtained by iterating the phase-update rule. By construction, each block is invariant under the Clifford action (up to a relabelling of terms within the orbit), and hence any linear combination $\hat H=\sum_k \hat H_k^{(\mathrm{sym})}$ has the injected symmetry. We note that the number of Paulis in this approach is not precisely controlled, as the size of the orbit is not known a priori for a given seed Pauli.

\subsection{All-to-all Heisenberg model}\label{app:heisenberg_model}

We consider the Hamiltonian
\begin{equation}\label{eq:app_heisenberg_hamiltonian}
    \hat{H}_{XXZ}
    =
    \sum_{1\le i<j\le n} J_{ij}
    \left(
        \hat X_i \hat X_j
        +
        \hat Y_i \hat Y_j
        +
        \Delta \hat Z_i \hat Z_j
    \right)
    +
    \sum_{i=1}^{n} h_i \hat Z_i .
\end{equation}
Here $J_{ij} = J_{XXZ}$ are pairwise couplings on the complete graph and $h_i$ are local longitudinal fields. These are selected from a random uniform distribution of unit variance. $\Delta$ is an anisotropy parameter, $\Delta = 1$ yields the XXX model, we choose $\Delta = 0.5$.

\subsection{Transverse-Field Ising Model}

\subsubsection{Closed-system model}

We consider a system of $n$ qubits with transverse-field Ising Hamiltonian
\begin{equation}\label{eq:app_ising_hamiltonian}
    \hat{H}_{\rm TFI}
    =
    J_{\rm TFI}\sum_{\langle i, j \rangle}^{n-1} \hat Z_i \hat Z_{j}
    +
    h_{\rm TFI}\sum_{i=1}^{n}  \hat X_i,
\end{equation}
where $\hat X_i,\hat Z_i$ are Pauli operators acting on site $i$, $J_{\rm TFI}$ are nearest-neighbour Ising couplings, and $h_{\rm TFI}$ are transverse fields. Here we assume open boundary conditions. The sum over $\langle i, j \rangle$ denotes a sum over nearest neighbours in a given geometry.

\subsubsection{Ising chain with local dephasing}

For the open-system case we use the GKSL master equation of Eq.~\eqref{eq:lindblad_master_eq}
with $\hat H$ given by Eq.~\eqref{eq:app_ising_hamiltonian} and local dephasing jump operators
\begin{equation}\label{eq:app_dephasing_jumps}
    \hat L_i = \sqrt{\gamma_i}\,\hat Z_i,
\end{equation}
where $\gamma_i\ge 0$ are site-dependent dephasing rates (or $\gamma_i=\gamma_z$ for uniform dephasing).

Since $\hat Z_i^\dagger \hat Z_i = \hat{\mathbb I}$, each dissipative contribution simplifies to
\begin{equation}\label{eq:app_dephasing_term}
    \hat L_i \hat\rho \hat L_i^\dagger
    -
    \frac{1}{2}\{\hat L_i^\dagger \hat L_i,\hat\rho\}
    =
    \gamma_i \big(\hat Z_i \hat\rho \hat Z_i - \hat\rho\big),
\end{equation}
so that the Liouvillian is
\begin{equation}\label{eq:app_liouvillian_dephasing}
    \mathcal L(\hat\rho)
    =
    -i[\hat H,\hat\rho]
    +
    \sum_{i=1}^{n}
    \gamma_i \big(\hat Z_i \hat\rho \hat Z_i - \hat\rho\big).
\end{equation}

\subsection{Toric Code}

For the Toric code \cite{BullockBrennen2007QuditSurface, Resende_toric, Chan_VQE_2024, ferguson2021measurement} on a square lattice with qudits on edges, the Hamiltonian is naturally expressed as a Pauli sum in the row-tableau convention of Eq.~\eqref{eq:pauli_sum}, with each term represented by a row $\vec p_i=(\vec x_i,\vec z_i)\in\mathbb F_d^{2n}$. In the qubit case ($d=2$), star operators are purely $Z$-type and plaquette operators that are purely $X$-type in Eq.~\eqref{eq:toricH}. In tableau form, each star contributes a row of the form $(\vec 0,\vec z_s)$ and each plaquette contributes a row of the form $(\vec x_p,\vec 0)$, with support determined by the corresponding set of edges in the lattice indexing used by the code.

For prime $d>2$, the $\mathbb Z_d$ qudit Toric code \cite{BullockBrennen2007QuditSurface} is obtained by replacing qubit Paulis with generalised Weyl operators $\hat X,\hat Z$ and assigning an orientation to edges so that star and plaquette operators carry signed exponents $\pm 1\in\mathbb F_d$ (implemented as $1$ and $d-1$ modulo $d$):
\begin{equation}
    \hat A_s=\prod_e \hat Z_e^{\,\sigma_{s,e}},
    \qquad
    \hat B_p=\prod_e \hat X_e^{\,\tau_{p,e}},
    \qquad
    \sigma_{s,e},\tau_{p,e}\in\{0,\pm 1\}.
\end{equation}
To match the implementation, the default qudit Hamiltonian is taken in Hermitian form
\begin{equation}
    \hat H_{\mathrm{TC}}^{(d)}
    =
    c_z \sum_s \bigl(\hat A_s+\hat A_s^\dagger\bigr)
    +
    c_x \sum_p \bigl(\hat B_p+\hat B_p^\dagger\bigr)
    +
    c_g \sum_e \bigl(\hat Z_e+\hat Z_e^\dagger\bigr),
\end{equation}
where the final term is the single-edge $Z$-type field/gauge term used in the code (with no $X$-field term included). For $d=2$, one has $\hat A_s^\dagger=\hat A_s$ and $\hat B_p^\dagger=\hat B_p$, and the oriented signs reduce modulo $2$, recovering the usual qubit Toric code structure (up to coupling normalization conventions).

\subsection{Spinless fermionic $t$-$V$ chain}

A standard interacting lattice-fermion model in one dimension is the spinless fermionic $t$-$V$ chain (see e.g. Refs.  \cite{Vardhan_fermionic_2017, KieferEmmanouilidis_Unlimited_2021}),
\begin{equation}
\hat{H}_{\mathrm{tV}}
=
- J_{tV} \sum_{\langle i,j\rangle}
\left(\hat{c}_i^\dagger \hat{c}_j + \hat{c}_j^\dagger \hat{c}_i\right)
+ \sum_i D_i\, \hat{n}_i
+ V_{tV} \sum_{\langle i,j\rangle} \hat{n}_i \hat{n}_j,
\end{equation}
Here $J_{tV}$ is the nearest-neighbour hopping amplitude, $D_i$ is a site-dependent on-site potential, and $V_{tV}$ is a nearest-neighbour interaction. The operator $\hat{c}_{i}^{(\dagger)}$ annihilates (creates) a spinless fermion on site $i$, and $\hat{n}_{i}=\hat{c}_{i}^\dagger \hat{c}_{i}$.
We study the case where $D_i$ is a random disorder. The model conserves total particle number $\hat{N}=\sum_i \hat{n}_i$, and provides a minimal setting for studying transport, localization, and interaction effects in one-dimensional fermionic systems. We map the spinless fermionic $t$-$V$ chain to qubits using the standard Jordan-Wigner convention:
\begin{equation*}
\hat{n}_j
=
\hat{c}_j^\dagger \hat{c}_j
=
\frac{\mathbb{1}-\hat{Z}_j}{2}.
\end{equation*}
For an open chain, the Jordan--Wigner transformation gives
\begin{equation*}
\hat{c}_j
=
\left(\prod_{\ell<j}\hat{Z}_\ell\right)
\frac{\hat{X}_j+i\hat{Y}_j}{2},
\qquad
\hat{c}_j^\dagger
=
\left(\prod_{\ell<j}\hat{Z}_\ell\right)
\frac{\hat{X}_j-i\hat{Y}_j}{2}.
\end{equation*}
Substituting these expressions into the spinless fermionic Hamiltonian gives, up to an additive identity shift,
\begin{equation*}
\hat{H}_{\mathrm{tV}}
=
-\frac{J_{tV}}{2}
\sum_{\langle i,j\rangle}
\left(
\hat{X}_i\hat{X}_j
+
\hat{Y}_i\hat{Y}_j
\right)
-\frac{1}{2}
\sum_i D_i \hat{Z}_i
+
\frac{V_{tV}}{4}
\sum_{\langle i,j\rangle}
\left(
\hat{Z}_i\hat{Z}_j
-
\hat{Z}_i
-
\hat{Z}_j
\right).
\end{equation*}
Identity terms are omitted since they do not affect the symmetry structure.

\subsection{Fermi-Hubbard model}

The Fermi-Hubbard model is a canonical model of strongly correlated electrons on a lattice \cite{Hubbard_Electron_1963}:
\begin{equation}
\hat{H}_{\mathrm{FH}}
=
- t \sum_{\langle i,j\rangle,\sigma}
\left(\hat{c}_{i\sigma}^\dagger \hat{c}_{j\sigma} + \hat{c}_{j\sigma}^\dagger \hat{c}_{i\sigma}\right)
+ U \sum_i \hat{n}_{i\uparrow} \hat{n}_{i\downarrow}
- \mu \sum_{i,\sigma} \hat{n}_{i\sigma}.
\end{equation}
Here $t$ is the nearest-neighbour hopping amplitude, $U$ is the on-site interaction strength, and $\mu$ is the chemical potential. The operator $\hat{c}_{i\sigma}^{(\dagger)}$ annihilates (creates) a fermion with spin $\sigma\in\{\uparrow,\downarrow\}$ on site $i$, and $\hat{n}_{i\sigma}=\hat{c}_{i\sigma}^\dagger \hat{c}_{i\sigma}$. This model is mapped to spins via the Jordan-Wigner transformation using OpenFermion \cite{McClean_2020}.

\end{document}